\newtheorem{theorem}{Theorem}
\newtheorem{lemma}[theorem]{Lemma}
\newcommand{\lbn}{\mathrm{lbn}}
\newcommand{\rbn}{\mathrm{rbn}}
\newcommand{\vertex}{\mathrm{vertex}}
\newcommand{\slice}{\mathrm{slice}}
\newcommand{\algtable}{\mathrm{table}}
\newcommand{\algmerge}{\mathrm{merge}}
\newcommand{\algcreate}{\mathrm{create}}
\newcommand{\algextend}{\mathrm{extend}}
\newcommand{\algcontract}{\mathrm{contract}}
\newcommand{\algadjust}{\mathrm{adjust}}
\newenvironment{proof}{\noindent{\bf Proof:}}{
\hspace*{\fill} $\Box$ \vskip \belowdisplayskip}
\title{The Densest $k$ Subgraph Problem in $b$-Outerplanar Graphs} 
\author{Sean Gonzales and Theresa Migler}
\begin{document}

\maketitle

\begin{abstract}
We give an exact $O(nk^2)$ algorithm for finding the densest $k$ subgraph in outerplanar graphs. We extend this to an exact $O(nk^2 8^b)$ algorithm for finding the densest $k$ subgraph in $b$-outerplanar graphs. Finally, we hypothesize that Baker's PTAS technique will not work for the densest $k$ subgraph problem in planar graphs.
\end{abstract}

\section{Introduction}

The {\em density} of a graph is defined to be the ratio of edges to
vertices in the graph. More precisely, if an undirected graph
$G= (V,E)$ has $|V|=n$ vertices and $|E|=m$ edges, the density of $G$
is ${m\over n}$. 

In a network that represents academic collaboration, authors within
the densest component of the graph tend to be the most
prolific~\cite{newman2004}. Dense components in a web graph might
correspond to sets of web sites dealing with related
topics~\cite{gkr1998} or link farms~\cite{gkt2005}. A dense subgraph
of the dolphin social network might identify the most popular
dolphins. Finding dense subgraphs aids in finding price value
motifs~\cite{djdlt2009}. Dense subgraphs can identify communities in
social networks~\cite{cs2010}. In the field of visualization, finding
dense subgraphs allows for better graph
compression~\cite{bc2008}. Dense subgraphs assist in finding stories
and events in micro-blogging streams such as
Twitter~\cite{asks2012}. Dense subgraphs can be used to discover
regulatory motifs in genomic DNA~\cite{fnbb2005}, and to find
correlated genes~\cite{llpbszs2005}.  It is therefore interesting to
find the dense components of graphs, or {\em dense subgraphs}. The
density of a subgraph induced by a vertex set $S\subseteq V$ is
$d(S) = {{|E(S)|}\over {|S|}}$. Goldberg gave an $O(nm \log n\log ({{n^2}\over{m}}))$ algorithm to find the
subgraph of maximum density using network flow
techniques \cite{goldberg1984}. 

Given an undirected graph $G=(V,E)$ and an integer $k$, the {\em
  densest $k$ subgraph problem} is defined as follows: find a subgraph
$H\subseteq G$ such that $|V(H)|=k$ and the density of $H$ is
maximized. This problem can be shown to be NP-Hard by a reduction from
Clique \cite{fkp1999}. Since (if $P\neq NP$) there is no polynomial
time algorithm for this problem, we search for an approximation
algorithm. The best approximation algorithm that one could hope to
develop is a {\em PTAS} or {\em polynomial time approximation scheme}. A
PTAS is an algorithm that takes as an input an instance of an
optimization (maximization) problem and a parameter $\epsilon$ and in
polynomial time produces a solution that is within $1-\epsilon$ of the
optimal solution. Unfortunately there is no PTAS (unless NP$\subseteq
\cap_{\epsilon >0}$BPTIME$(2^{n^\epsilon})$) for the densest
$k$-subgraph problem. This is shown by a reduction from the Minimum
Distance of Code Problem \cite{khot2006}. The best approximation known
is a recent $O(n^{1/4 + \epsilon})$-approximation algorithm that runs
in time $n^{O(1/\epsilon)}$ \cite{bccfv2010}, which is an improvement
on the long standing $O(n^{1/3})$ approximation~\cite{fkp1999}.

Often, when there is no PTAS possible for a problem in a general
graph, one will find a PTAS for that same problem restricted to {\em
  planar graphs}, graphs that can be drawn in the plane in such a way
that no two edges cross each other. For example, the maximum
independent set problem, which is known to be NP-Complete in general
graphs, has a PTAS in planar graphs \cite{baker1994}. There are many
PTASes for classic problems such as: travelling
salesperson~\cite{klein2008}, Steiner tree~\cite{bkm2009}, and Steiner
forest~\cite{bhm2009}, when the domain is restricted to be
planar. This leads us to consider planar variants of the densest
subgraph problems. 

While the complexity of the unconnected densest $k$ subgraph problem
on planar graphs is unknown \cite{cp1984}, the connected planar
densest $k$ subgraph problem is NP-Complete by a reduction from the
connected vertex cover problem on planar graphs with maximum degree 4
\cite{kb1991}.

In what follows we begin with an exact polynomial time algorithm for
the densest $k$ subgraph problem in outerplanar graphs inspired by
Baker~\cite{baker1994} in Section~\ref{sec:outerplanar}. We extend this
algorithm to $b$-outerplanar graphs in
Section~\ref{sec:b_outerplanar}. Finally we hypothesize that a Polynomial Time
Approximation Scheme (PTAS) for the densest-k subgraph problem in
planar graphs cannot be achieved using Baker's technique in Section~\ref{sec:PTAS}.

\section{An Algorithm to Find the Densest $k$ Subgraph Problem in
  Outerplanar Graphs}\label{sec:outerplanar}

We modify the dynamic programming technique of Baker~\cite{baker1994} for
outerplanar graphs. Suppose we are given an {\em outerplanar graph}, a
planar graph with an embedding such that all vertices are on the outer
face. We will define a rooted labelled tree, $T$, that will correspond to
the given outerplanar graph. This tree construction was given by Baker~\cite{baker1994}. We repeat it here for completeness.   

Suppose we are given an outerplanar graph, $G$. If $G$ has
any bridges (an edge whose removal disconnects the graph), $(x,y)$,
add an additional edge $(x,y)$. Then the bridge may be treated as a
face. We will call edges {\em exterior} if they lie on the outer face
and {\em interior} otherwise.

Now we build the tree, $T$: First we suppose that $G$ has no
cutpoints (vertices whose removal disconnects the graph). For a description of the tree construction on an outerplanar graph with cutpoints, see~\cite{baker1994}. We place a tree vertex
in each interior face (call these {\em interior tree vertices}) and on
each exterior edge (call these {\em exterior tree vertices}). We add a tree edge between each interior tree vertex and the interior
tree vertices of adjacent faces. We also add a tree edge between
each interior tree vertex and any exterior tree vertices with edges
adjacent to the face. For a very simple example, see Figure~\ref{fig:outerplanar_tree
  construction}.

\begin{figure}
  \includegraphics[scale =
  .25]{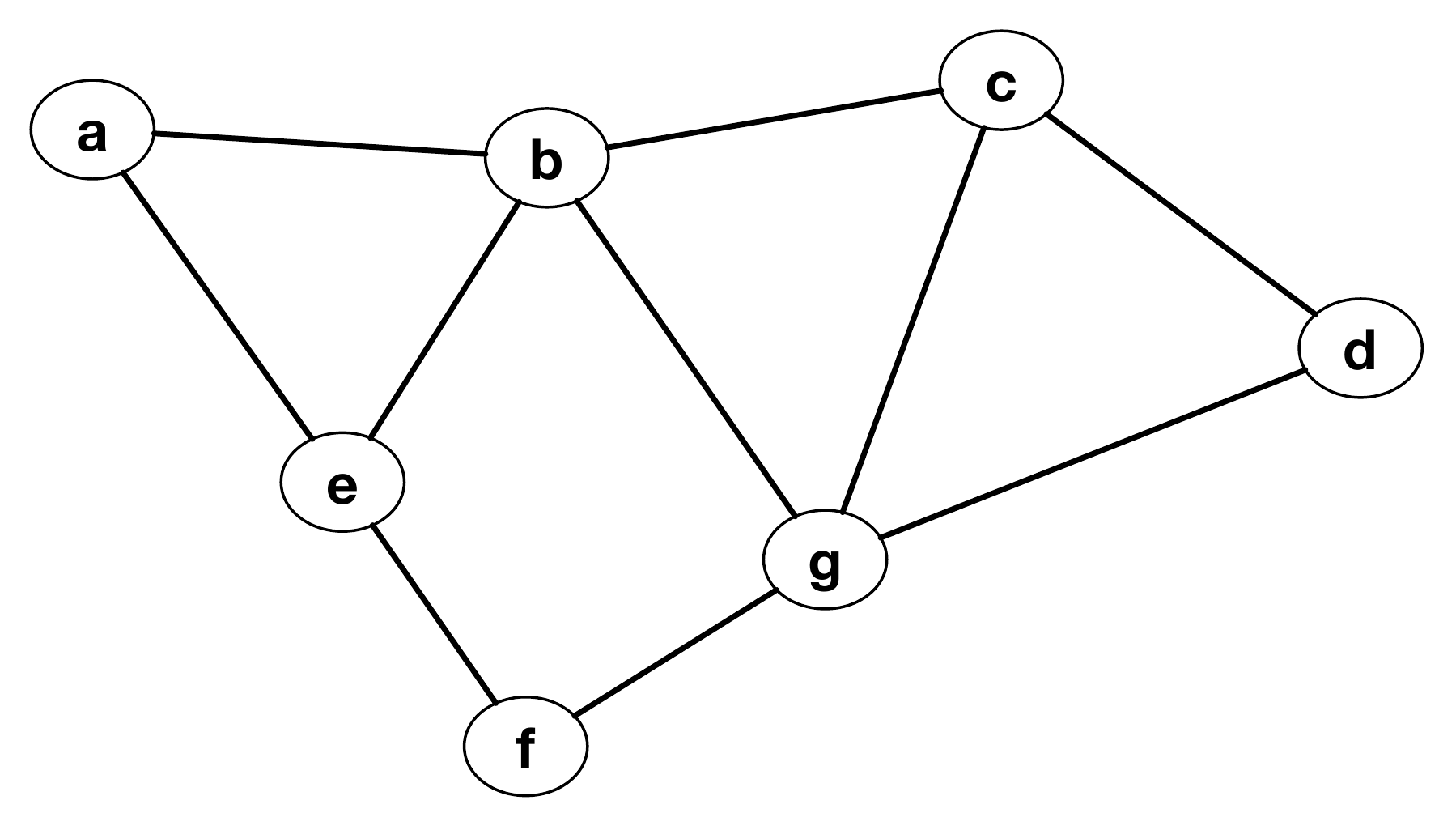}
  \includegraphics[scale
  =.25]{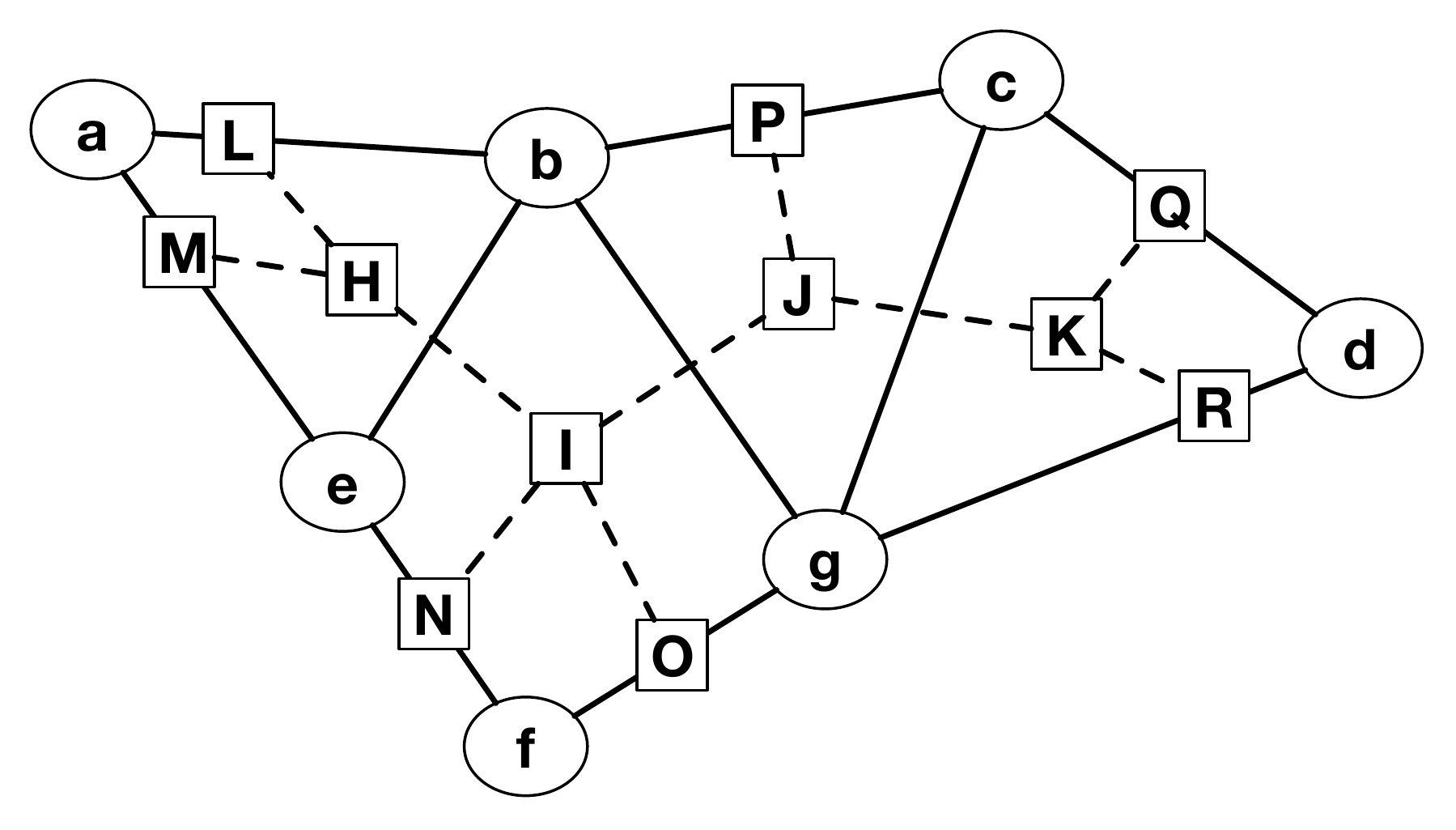}
\includegraphics[scale = .25]{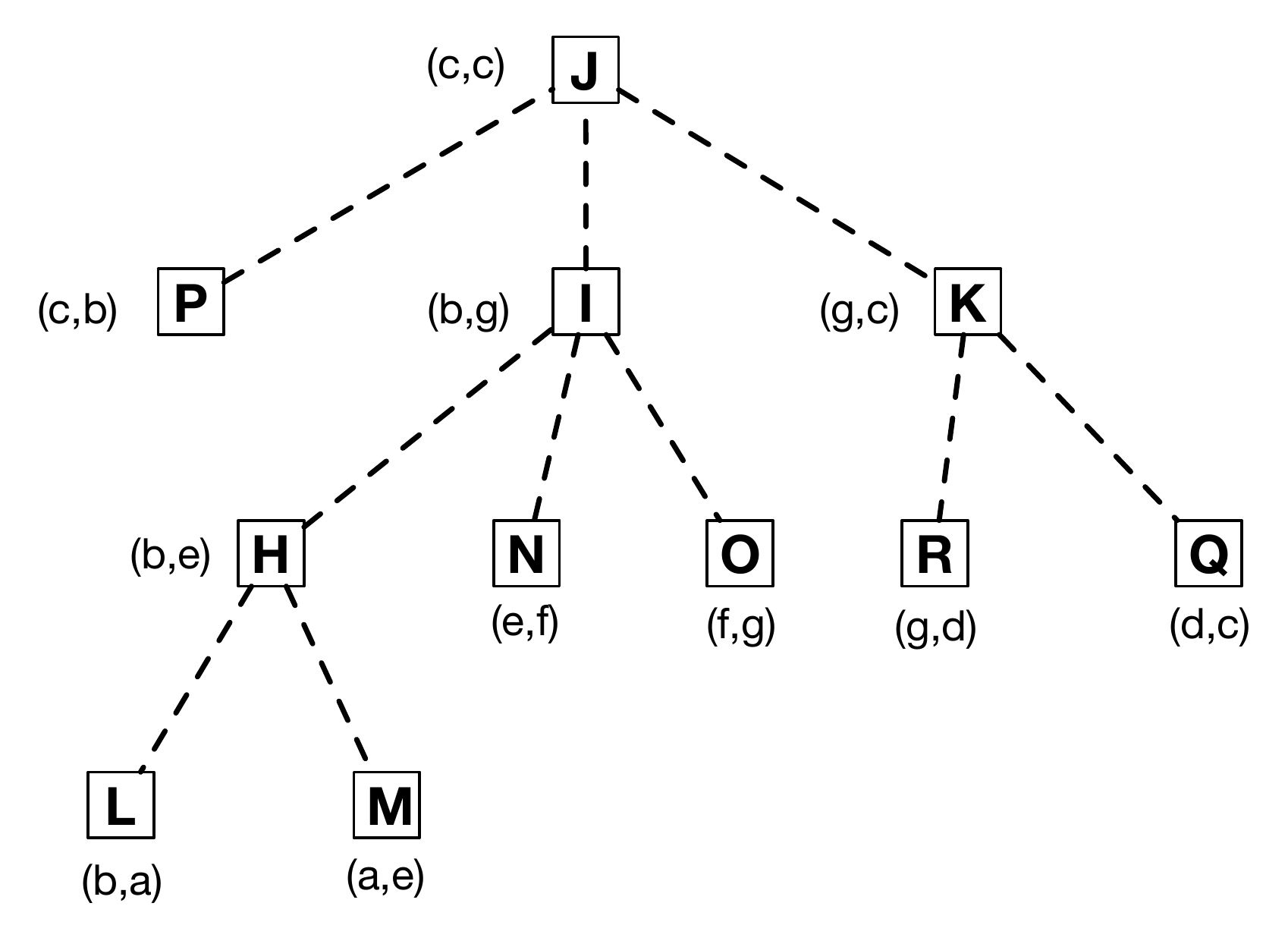}

  \caption{The left figure is an example outerplanar graph, $G$. The
    middle figure shows the construction of $T$ (with square vertices
    and dashed edges) from $G$. The right figure shows the labeling of
  $T$ after making the choice to have vertex $J$ as the root with $P$
  as $J$'s leftmost child.}
\label{fig:outerplanar_tree construction}
\end{figure}

We may choose any interior tree vertex to be the root of $T$. We may
also choose which child of this root will be the leftmost child. These
two choices determine the ordering of all remaining vertices. We label the vertices of $T$ recursively. Label each exterior tree
vertex in the tree with the exterior edge that it is on. Label each
interior tree vertex with the first and last vertices of its
children's labels.

After constructing Baker's tree (as described above), we give our original dynamic program. We fill in a table for each vertex, $v$, in $T$ (with label
$(x,y)$). The table will hold the maximum number of edges for a
subgraph on $k$ vertices ($k$ ranging
from 0 to the size of the subgraph for the subtree rooted at $v$)
depending on whether or not $x$ and $y$ are in the set. For example,
consider the leaf vertex, $L$, in $T$ representing edge $(b,a)$ in
$G$:

$Table(L) = T_{(b,a)} = $
\begin{tabular}{|l|l|l|l|l|}
\hline
$b$ & $a$ & $k=0$ & $k=1$ & $k=2$ \\ \hline
0   & 0   &   0                      &   $\emptyset$    &   $\emptyset$    \\ \hline
0   & 1   &   $\emptyset$    &    0                     &    $\emptyset$   \\ \hline
1   & 0   &   $\emptyset$    &    0                     &    $\emptyset$   \\ \hline
1   & 1   &   $\emptyset$    &    $\emptyset$   &   1 \\ \hline
\end{tabular}

This table is undefined for many entries, for example, the entry where
$b=0$, $a=1$, and $k=2$ is undefined because there is no way to have a
subgraph on 2 vertices when there is only one vertex to select
from. For the entry where $b=1$, $a=1$, and $k=2$, we obtain a value
of $1$ because there is an edge between $b$ and $a$. The table will be
identical for each leaf in $T$.

Now we will fill out a table for a tree vertex with exactly two
children. This table is calculated by \textit{merging} the tables for its two children, as described below. Consider $H$ with label $(b,e)$:

$Table(H) = T_{(b,e)} =$
\begin{tabular}{|l|l|l|l|l|l|}
\hline
$b$ & $e$ & $k=0$       & $k=1$       & $k=2$       & $k=3$       \\ \hline
0   & 0   & 0           & 0           & $\emptyset$ & $\emptyset$ \\ \hline
0   & 1   & $\emptyset$ & 0           & 1           & $\emptyset$ \\ \hline
1   & 0   & $\emptyset$ & 0           & 1           & $\emptyset$ \\ \hline
1   & 1   & $\emptyset$ & $\emptyset$ & 1           & 3           \\ \hline
\end{tabular}

We now give pseudocode for how to {\em merge} two sibling tables, $T_{(x,y)}$ and
$T_{(y,z)}$ (creating $T_{(x,z)}$). For the table of a vertex with label $\mathcal{L}$,  $\mathcal{K}$ will be the minimum of $k$ (the input $k$ for the densest $k$ subgraph problem) and the number of vertices in the subtree represented by label $\mathcal{L}$.

\begin{algorithm}[H]
\caption{Merge($(T_{(x,y)},
  T_{(y,z)},\mathcal{K})$)}\label{alg:merge_outerplanar}
\begin{algorithmic}
  \For {Each $(b_x,b_z)$ in $\{(0,0),(0,1),(1,0),(1,1)\}$}

\State {$values = [ ]$}

\For {Each $k$ from 0 to $\mathcal{K}$}

\For {Each $b_y$ in $\{0,1\}$}

\For {Each $k_x$ from $b_x+b_y$ to $k$}

\State {$k_z = k - k_x +b_y$}

\If {$x == z$ AND $(b_x$ AND $b_y)$}

\State {$k_z ++$}

\EndIf

\State {$value = T_{(x,y)}(b_x,b_y,k_x) + T_{(y,z)}(b_y,b_z,k_z)$}

\If {$x\neq z$ AND $(b_x$ AND $b_z)$ AND $(x,z)$ is an edge}

\State {$value ++$}

\EndIf

\If{$value \neq \emptyset$}

\State {Add $value$ to $values$}

\EndIf

\EndFor

\EndFor

\EndFor

\If {$values$ is not empty}

\State {$T_{(x,z)}(b_x,b_z,k) =max(values)$}

\Else

\State {$T_{(x,z)}(b_x,b_z,k) = \emptyset$}

\EndIf
  \EndFor
\end{algorithmic}

\end{algorithm}

To find the solution for the densest $k$ subgraph problem, we look to the table for the root, in the case of our example, the table for $J$. We take the maximum value in the table for the column corresponding to $k$.

$Table(J) = T_{(c,c)} =$
\begin{tabular}{|l|l|l|l|l|l|l|l|l|l|}
\hline
$c$ & $c$ & $k=0$       & $k=1$       & $k=2$       & $k=3$       & $k=4$       & $k=5$       & $k=6$       & $k=7$       \\ \hline
0   & 0   & 0           & 0           & 1           & 3           & 4           & 6           & 7           & $\emptyset$ \\ \hline
0   & 1   & $\emptyset$ & $\emptyset$ & $\emptyset$ & $\emptyset$ & $\emptyset$ & $\emptyset$ & $\emptyset$ & $\emptyset$ \\ \hline
1   & 0   & $\emptyset$ & $\emptyset$ & $\emptyset$ & $\emptyset$ & $\emptyset$ & $\emptyset$ & $\emptyset$ & $\emptyset$ \\ \hline
1   & 1   & $\emptyset$ & 0           & 1           & 3           & 4           & 6           & 8           & 10          \\ \hline
\end{tabular}

\vspace{.5cm}
To find all intermediate tables, please see Appendix~\ref{app:outerplanar_tables}.

The following two lemmas will give the proof of correctness of the dynamic programming algorithm. 

\begin{lemma}
\label{outer_lemma1}
For each leaf vertex, $X$ (with label $(a,b)$), in the tree, $Table(X) = T_{(a,b)}$ correctly gives the maximum number of edges in the subgraph corresponding to $X$ with the constrained number of vertices (each $k$ value).
\end{lemma}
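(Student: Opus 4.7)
The plan is to observe that in Baker's tree construction, a leaf of $T$ is by definition an exterior tree vertex, placed on a single exterior edge of $G$. The subtree rooted at such a leaf $X$ with label $(a,b)$ therefore corresponds to the two-vertex, one-edge subgraph consisting of $a$, $b$, and the edge $(a,b)$. With this identification in hand, the lemma reduces to a finite check of the four rows and three (possibly non-empty) columns of the displayed table.

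I would carry out the verification as an exhaustive case analysis on the indicator pair $(b_a, b_b) \in \{0,1\}^2$. For each choice, the values of $b_a$ and $b_b$ determine precisely how many of the two vertices are selected, which in turn pins down the single value of $k$ (namely $k = b_a + b_b$) for which the entry can be defined. When $(b_a, b_b) = (0,0)$, the empty selection has $k=0$ and $0$ edges; when $(b_a, b_b) \in \{(0,1),(1,0)\}$, the selection has $k=1$ and $0$ edges; when $(b_a, b_b) = (1,1)$, both endpoints are selected so $k=2$ and the unique edge $(a,b)$ is present, giving $1$ edge. Each of these is the unique, hence maximum, value for the corresponding constrained subselection.

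Finally, I would argue that every other $(b_a, b_b, k)$ combination corresponds to an impossible selection (for instance, $k = 2$ when only one vertex is marked, or $k$ exceeding the number of vertices in the subgraph), and therefore the table value of $\emptyset$ is correct by definition. Matching these conclusions against the displayed leaf table $T_{(b,a)}$ completes the proof. I do not anticipate a real obstacle: the only nontrivial point is confirming that a leaf of $T$ really does represent just a single exterior edge, which is immediate from the construction recalled earlier in the section.
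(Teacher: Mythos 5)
Your proposal is correct and follows essentially the same argument as the paper: both reduce the lemma to an exhaustive check of the finite leaf table, verifying that the only defined entries are those with $k$ equal to the number of selected endpoints ($0$, $1$, or $2$ edges accordingly) and that all other entries are rightly undefined. The only cosmetic difference is that you organize the case analysis by the indicator pair $(b_a,b_b)$ while the paper organizes it by the value of $k$.
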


\begin{proof}
Each leaf vertex, $X$, in the tree corresponds to a single edge, $(a,b)$, in the original graph. There are only three possible values of $k$ in this case, zero, one, or two. 

If there are zero vertices included, the maximum number of edges is zero. However if we try to include either vertex $a$ or $b$ or both this would be absurd (because we are including zero vertices), so these entries should be undefined.

If one vertex is to be included, again it would be absurd to try to include zero or two vertices, so these entries are undefined. The only feasible possibilities are including only $a$ for zero edges or only $b$ for zero edges.

If two vertices are to be included, it would be absurd to include zero or only one vertex, so these entries are undefined. The only feasible possibility is to include both $a$ and $b$ and since there is an edge between $a$ and $b$, we get one edge.
\end{proof}

\begin{lemma}
For each vertex, $V$, in the tree $Table(V)$ correctly gives the maximum number of edges in the subgraph corresponding to $V$ with the constrained number of vertices (each $k$ value).
\end{lemma}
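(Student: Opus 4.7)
The plan is to proceed by structural induction on the tree $T$. The base case, where $V$ is a leaf, is exactly Lemma~\ref{outer_lemma1}. For the inductive step, fix an internal vertex $V$ with label $(x,z)$ and assume the lemma for all proper descendants of $V$. If $V$ has children $U_1,\ldots,U_c$ in left-to-right order with successive labels $(x_0,x_1),(x_1,x_2),\ldots,(x_{c-1},x_c)$ where $x_0=x$ and $x_c=z$, then $Table(V)$ is obtained by applying Algorithm~\ref{alg:merge_outerplanar} left-to-right to consecutive pairs of (partial) tables. Correctness for $V$ therefore reduces to the two-child case: given correct $T_{(x,y)}$ and $T_{(y,z)}$, show that the merge produces a correct $T_{(x,z)}$.

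Let $G_1,G_2$ be the subgraphs of $G$ corresponding to the two children, so that $G_V=G_1\cup G_2$ and the two share at least the vertex $y$ (and also $x$ when $x=z$, i.e.\ when the combined subtree wraps around a cycle). The structural fact I would lean on at this point is a direct consequence of outerplanarity: because every vertex of $G$ lies on the outer face and $y$ (together with $x=z$ when applicable) separates $G_1$ from $G_2$ in the embedding, the only edges of $G_V$ not already present in $E(G_1)\cup E(G_2)$ are possible boundary edges $(x,z)$ with $x\neq z$. This is exactly the edge the algorithm tests for and adds to $value$.

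The merge then needs to be verified in both directions. For \emph{achievability}, given $(b_x,b_z,k)$ and values $b_y,k_x,k_z$ produced by the algorithm together with witness subgraphs $S_1\subseteq V(G_1)$ and $S_2\subseteq V(G_2)$ realizing $T_{(x,y)}(b_x,b_y,k_x)$ and $T_{(y,z)}(b_y,b_z,k_z)$ (which exist by induction), the two witnesses agree on the inclusion status of every shared vertex, so $S_1\cup S_2$ is a valid $k$-vertex subset of $V(G_V)$, and by the structural fact its induced edge count equals the sum of the child table entries plus the optional boundary edge. For \emph{optimality}, any $k$-vertex subgraph $S\subseteq V(G_V)$ with the prescribed boundary pattern splits as $S_i=S\cap V(G_i)$; setting $b_y=[y\in S]$, $k_x=|S_1|$, $k_z=|S_2|$, the inclusion-exclusion identity $|S_1|+|S_2|=k+b_y$ (when $x\neq z$) matches the algorithm's update $k_z=k-k_x+b_y$, and the inductive bounds on $|E(G_i[S_i])|$ upper-bound the total edge count by the computed merge value.

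I expect the main obstacle to be the boundary case $x=z$. Here $x$ is itself shared between $G_1$ and $G_2$, so the vertex-accounting becomes $|S_1|+|S_2|=k+b_y+b_x$, and any hypothetical boundary edge $(x,z)$ degenerates and must not be counted. The conditional increment of $k_z$ and the $x\neq z$ guard on the boundary-edge adjustment in Algorithm~\ref{alg:merge_outerplanar} are the algorithm's responses to these two issues, but verifying them requires a careful case split on the values of $b_x$ and $b_y$ within this branch. Once that case analysis is in hand, the induction closes.
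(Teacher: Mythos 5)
Your proposal is correct and follows essentially the same route as the paper's proof: structural induction with Lemma~\ref{outer_lemma1} as the base case, reduction of the multi-child case to iterated pairwise merges, and verification of the merge via vertex accounting ($k_z = k - k_x + b_y$, with the extra increment when $x=z$) plus the single boundary edge $(x,z)$. Your explicit separation of achievability and optimality, and your statement of the structural fact that the two child subgraphs share only $y$ (and $x$ when $x=z$) so that $(x,z)$ is the only new edge, make the argument if anything slightly more careful than the paper's, which walks through the procedure at the same level of detail you sketch.
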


\begin{proof}
If $V$ is a leaf vertex, then Lemma \ref{outer_lemma1} gives us that $Table(V)$ is correct. So, we may suppose that $V$ is not a leaf vertex and that the tables for the children of $V$ are correct. Let $X_1, X_2, \dots, X_{m-1}$ denote the children of $V$, and let the child $X_i$ have label $(x_i, x_{i+1})$. This means that $V$ has label $(x_1,x_m)$. The table for $V$ is then calculated by merging the tables for each $X_i$; that is, we first merge $Table(X_1) = T_{(x_1,x_2)}$ and $Table(X_2) = T_{(x_2,x_3)}$ to get a table $T_{(x_1,x_3)}$, then we merge $T_{(x_1,x_3)}$ with $Table(X_3) = T_{(x_3,x_4)}$ to get $T_{(x_1,x_4)}$ and so on, ultimately obtaining the table $T_{(x_1,x_m)}$. We claim that a call to \textit{merge} on the tables for two consecutive tree vertices $X$ and $Y$ with labels $(x,y)$ and $(y,z)$, respectively, results in a correct table for the union of the subgraphs corresponding to $X$ and $Y$, as well as the edge $(x,z)$ (if it is an edge). If the claim is true, then the result of merging the tables of each child $X_i$ must result in the correct table for $V$.

The table $T_{(x,z)}$ should have a row for each pair of values $b_x$ and $b_z$, and each entry in the row corresponds to a particular value of $k$. So, fix $b_x$, $b_z$, and $k$. The entry $T_{(x,z)}(b_x,b_z,k)$ should contain the maximum number of edges possible over all subgraphs of $k$ vertices that may contain $x$ and $z$ (depending on the values of $b_x$ and $b_z$), where this subgraph is contained in the union described above. The \textit{merge} procedure checks for the cases when we include $y$ or do not include $y$ in the subgraph by setting the bit $b_y$. So, fix $b_y$. The procedure then checks the egde count in every possible subgraph by varying the values of $k_x$ and $k_z$, which are the $k$ values we use to lookup edge counts in the tables $T_{(x,y)}$ and $T_{(y,z)}$, respectively. The idea is that we can check all possible subgraphs of size $k$ containing the included vertices by checking subgraphs corresponding to $X$ and $Y$ such that the number of vertices in both subgraphs sums to $k$. In order to ensure that the values of $k_x$ and $k_z$ do in fact result in a correctly sized subgraph union, the procedure first sets $k_x$ to be some value between $b_x + b_y$ and $k$ (the lower bound is due to the fact that $k_x$ is a vertex count, and $b_x$ and $b_y$ effectively count whether $x$ and $y$ are included). The procedure then sets $k_z$ to be $k - k_x + b+y$; adding in $b_y$ avoids double counting the inclusion of $y$ (if it is included). However, if $x = z$ and $b_x = b_z = 1$ (i.e. $x$ and $z$ are both included in the subgraph), we have double counted the vertex $x$, hence the procedure increments the value of $k_z$ to allow another vertex to be included in the union. The procedure then saves the value $T_{(x,y)}(b_x,b_y,k_x) + T_{(y,z)}(b_y,b_z,k_z)$ and increments it if $(x,z)$ is an edge and both vertices are included (this is because each individual table does not account for the edge). Once all such values are computed, the procedure sets $T_{(x,z)}(b_x,b_z,k)$ to be the maximum value (if all values were undefined, the table entry is set to be undefined as well). Since each value corresponds to an edge count of the corresponding subgraph and all subgraphs were checked, the table entry is thus the maximum number of edges. Therefore, the \textit{merge} procedure results in a correct table for the corresponding subgraph, whence the table for $V$ must also be correct.
\end{proof}

Note that the running time for this dynamic program on outerplanar graphs is, $O(nk^2)$: There is a table for each tree vertex and the number of tree vertices is the number of edges in the graph (plus 1 for the root which doesn't correspond to an edge). Also, since our graph is planar the number of edges is linear in the number of vertices. It takes $O(k^2)$ time to fill in each table.

\section{An Algorithm to Find the Densest $k$ Subgraph Problem in
  $b$-Outerplanar Graphs}\label{sec:b_outerplanar}

We now lay out the dynamic programming solution for the densest $k$ subgraph problem in $b$-outerplanar graphs. We define trees, labels, and slices as given by Baker \cite{baker1994}. We assume that the given graph is connected. We define {\em levels}. Level 1 vertices form the outer face of the $b$-outerplanar graph. Level 2 vertices form the outer face if all level 1 vertices are removed, and so on. Additionally, we assume that the level $i$ vertices contained in a level $i-1$ face form a connected subgraph called a \textit{level $i$ component}. If this is not the case, we may add fake edges that will simply be ignored when we calculate table values. Throughout this section, we will use the 3-outerplanar graph in Figure \ref{fig:kout_ex} as a running example.

Furthermore, we need to construct a triangulation of our graph. We will use the triangulation to construct trees in Section \ref{koutTrees} and slices in Section \ref{koutSlices}. Any triangulation will do, and one can be constructed in linear time by scanning the vertices in levels $i$ and $i+1$ in parallel for each $i = 1, 2, \dots, k-1$. A triangulation for the graph in Figure \ref{fig:kout_ex} is given in Figure \ref{fig:kout_ex_tri}.

\begin{figure}[h]
\begin{minipage}{.5\textwidth}
\centering
\includegraphics[width=\textwidth]{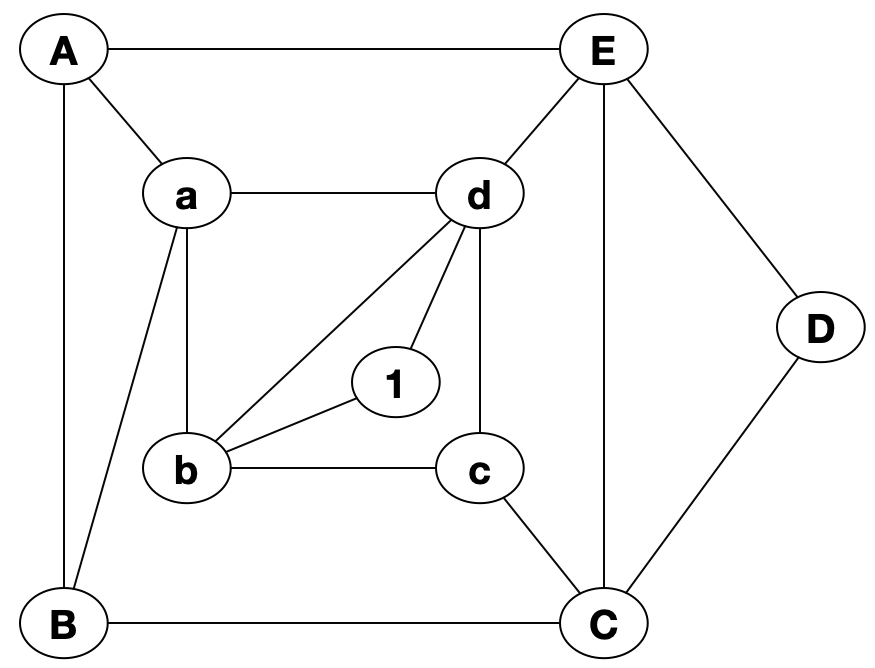}
\caption{figure}{3-outerplanar example}
\label{fig:kout_ex}
\end{minipage}
\begin{minipage}{.5\textwidth}
\centering
\includegraphics[width=\textwidth]{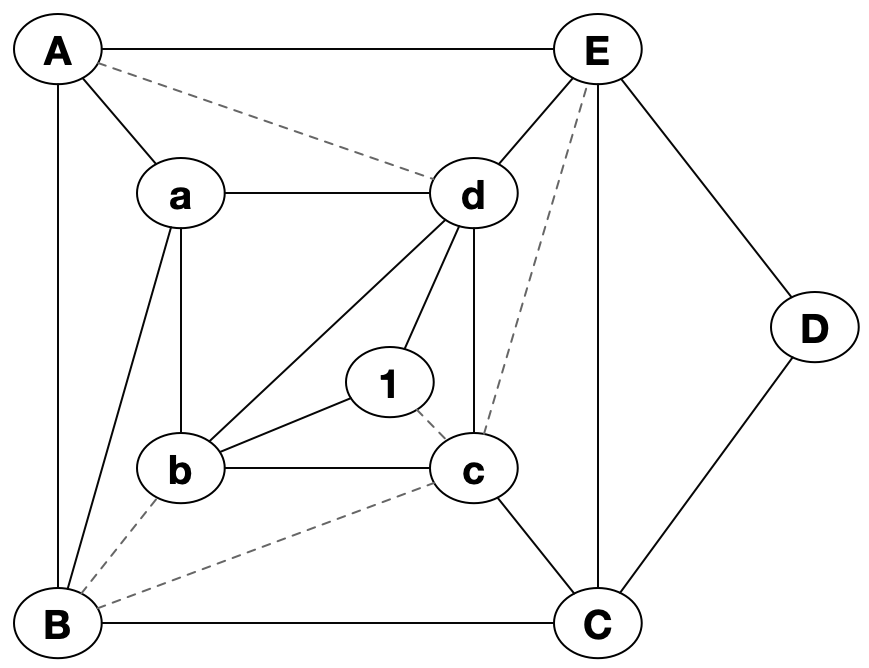}
\caption{figure}{Triangulation of example}
\label{fig:kout_ex_tri}
\end{minipage}
\end{figure}

\subsubsection{Tree Construction}
\label{koutTrees}
Since a level $i$ component is an outerplanar graph, we can construct a tree for each level $i$ component where the vertices of the tree represent interior faces and exterior edges of the component, just as in the previous section. However, we need to restrict how the roots and leftmost children are chosen for trees of level $i$ components, where $i > 1$.

Suppose we want to construct the tree for the level $i$ component $C$ that is enclosed by a level $i-1$ face $f$. Furthermore, suppose we have already constructed the tree for the level $i-1$ component that contains $f$, hence $f$ is represented by some vertex $(x, y)$ in that tree. The root of the tree for $C$ will be of the form $(z, z)$ for some vertex $z$ in $C$. If $x = y$, we pick $z$ so that $z$ is adjacent to $x$ in the triangulation of the graph. If $x \neq y$, we pick $z$ to be the third vertex in the triangle containing $x$ and $y$ in the triangulation.

The leftmost child for $(z, z)$ (assuming $C$ contains more than one vertex) is of the form $(z, u)$, where $u$ is the vertex in $C$ such that $(z, u)$ is an exterior edge in $C$ and is the first edge counterclockwise around $z$ after the edge $(z, x)$ in the triangulation. With these two restrictions, the rest of the tree is constructed exactly as in the previous section. The trees for the components of the graph in Figure \ref{fig:kout_ex} are given in Figure \ref{fig:kout_trees}.

\begin{figure}[h]
\centering
\includegraphics[width=\textwidth]{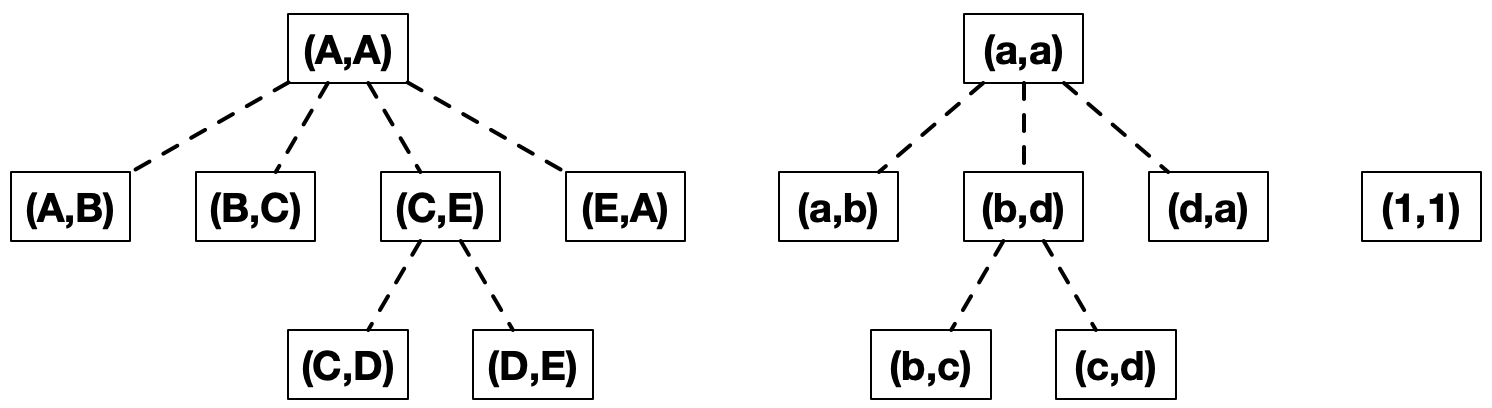}
\caption{figure}{Trees for example graph}
\label{fig:kout_trees}
\end{figure}

\subsubsection{Slice Construction}
\label{koutSlices}
The dynamic programming solution follows a divide and conquer paradigm, where we divide the graph into so-called \textit{slices} and calculate the tables for each slice before merging them together. Each vertex in each tree (and hence each interior face and exterior edge in each component of the graph) corresponds to a particular slice. We lay out the construction of these slices below. The idea is to first define left and right \textit{boundaries} for each tree vertex, and then define their slices by taking the induced subgraph of all vertices and edges that exist between the boundaries. The analogy here is that you can obtain a slice of pie by first deciding where your two cut lines will be (the left and right boundaries), and then your slice will be the pie that exists between your cuts.

Let $C$ be a level $i$ component enclosed by the level $i-1$ face $f$, and let $(x,y)$ be the tree vertex corresponding to $f$. For a particular slice of $C$, we want the left and right boundaries to each contain one vertex from each level of the graph, starting at level $i$ and extending outward to level 1. We will choose these vertices to be \textit{dividing points}. Given two successive leaf vertices $(x_1, x_2), (x_2, x_3)$ of the tree with root corresponding to $C$, we say a vertex $y$ in $f$ is a \textit{dividing point} for $(x_1, x_2)$ and $(x_2, x_3)$ if the edges $(x_2, x_1), (x_2, y), (x_2, x_3)$ occur in counterclockwise order around $x_2$ in the triangulation of the graph. For example, in Figure \ref{fig:kout_ex_tri}, $B$ is a dividing point for $(a,b)$ and $(b,c)$, and $B$, $C$, and $E$ are dividing points for $(b,c)$ and $(c,d)$.

We now define a pair of functions $\lbn$ and $\rbn$, which assign to each tree vertex a \textit{left boundary number} and a \textit{right boundary number}, respectively. Note that these boundary numbers will only be assigned to trees corresponding to level $i$ components for $i > 1$. This is because we will use boundary numbers to recursively define the boundaries themselves, where level 1 boundaries are the base case and level $i > 1$ boundaries are obtained from level $i-1$ boundaries. Using the same setup for $C$ and $f$ as given above, we define these functions as follows:

\begin{enumerate}[(F1)]
\item Let the leaves of the tree corresponding to $C$ be $v_1, v_2, \dots, v_t$ from left to right, where $v_i$ has label $(x_i, x_{i+1})$ for $1 \leq i \leq t$. Let the children of the tree vertex corresponding to $f$ be $z_1, z_2, \dots, z_r$ from left to right, where $z_j$ has label $(y_j, y_{j+1})$ for $1 \leq j \leq r$. Set $\lbn(v_1) = 1$ and $\rbn(v_t) = r + 1$. For $1 < i \leq t$, define $\lbn(v_i) = q$, where $q$ is the least $p \geq \lbn(v_{i-1})$ such that $y_p$ is a dividing point for $(x_{i-1}, x_i)$ and $(x_i, x_{i+1})$. For $1 \leq i < t$, define $\rbn(v_i) = \lbn(v_{i+1})$.
\item Let $v$ be a non-leaf vertex in the tree corresponding to $C$, and suppose $v$ has leftmost child $c_L$ and rightmost child $c_R$. Define $\lbn(v) = \lbn(c_L)$ and $\rbn(v) = \rbn(c_R)$.
\end{enumerate}

For example, let $C$ be the level 2 component in Figure \ref{fig:kout_ex} represented by the tree with root $(a,a)$. The enclosing face $f$ is represented by the tree vertex $(A,A)$. We first assign $\lbn((a,b)) = 1$ and $\rbn((d,a)) = 5$ since $(A,A)$ has 4 children. As mentioned above, $B$ is the only dividing point for $(a,b)$ and $(b,c)$, so we set $\lbn((b,c)) = 2$ (and hence $\rbn((a,b)) = 2$). We know that $B$, $C$, and $E$ are dividing points for $(b,c)$ and $(c,d)$, with indices 2, 3, and 4, respectively, out of the list of vertices in the children of $(A,A)$. Since 2 is the least $p \geq \lbn((b,c)) = 2$, we set $\lbn((c,d)) = 2$ (and hence $\rbn((b,c)) = 2$).

A complete list of boundary numbers can be seen in Figure \ref{fig:kout_trees_bns}, where we write the left and right boundary numbers on the left and right side of each tree vertex, respectively. Note that $\lbn((x_{i-1},x_i)) = \rbn((x_i,x_{i+1}))$ for every pair of consecutive tree vertices $(x_{i-1},x_i)$ and $(x_i,x_{i+1})$.

\begin{figure}[h]
\centering
\includegraphics[width=.7\textwidth]{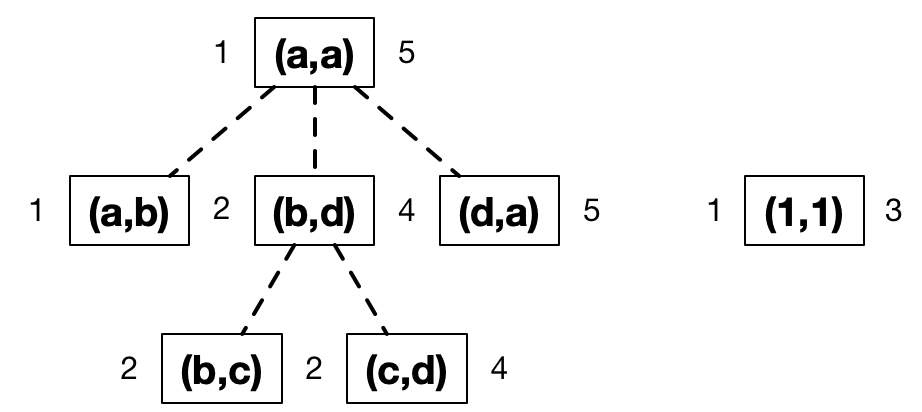}
\caption{figure}{Trees with boundary numbers}
\label{fig:kout_trees_bns}
\end{figure}

We are now ready to formally define left and right boundaries. Let $v$ be a tree vertex with label $(x,y)$. Then,

\begin{enumerate}[(B1)]
\item If $v$ is a vertex in a level 1 tree, the left boundary of the slice of $v$ is $x$ and the right boundary is $y$.
\item Otherwise, $v$ is a vertex in a level $i$ tree that corresponds to some level $i$ component enclosed by a level $i-1$ face $f$. Suppose the tree vertex corresponding to $f$, denoted $\vertex(f)$, has $s$ children. Let $\ell = \lbn(v)$ and let $r = \rbn(v)$. Then the left boundary of the slice of $v$ is $x$ plus the left boundary of the slice of the $\ell$th child of $\vertex(f)$, where we can let the left boundary of the nonexistent $(s+1)$ child of $\vertex(f)$ equal the right boundary of the $s$ child of $\vertex(f)$. Similarly, the right boundary of the slice of $v$ is $y$ plus the right boundary of the $(r-1)$ child of $\vertex(f)$, where we can let the right boundary of the nonexistent 0th child of $\vertex(f)$ equal the left boundary of the 1st child of $\vertex(f)$.
\end{enumerate}

A complete list of boundaries for every tree vertex is given in Figure \ref{fig:kout_trees_bndrs}. As with boundary numbers, the left and right boundaries have been written on the left and right side of each corresponding tree vertex. By construction, the right boundary of the tree vertex $(x_{i-1},x_i)$ must equal the left boundary of the next consecutive tree vertex $(x_i,x_{i+1})$.

\begin{figure}[h]
\centering
\includegraphics[width=\textwidth]{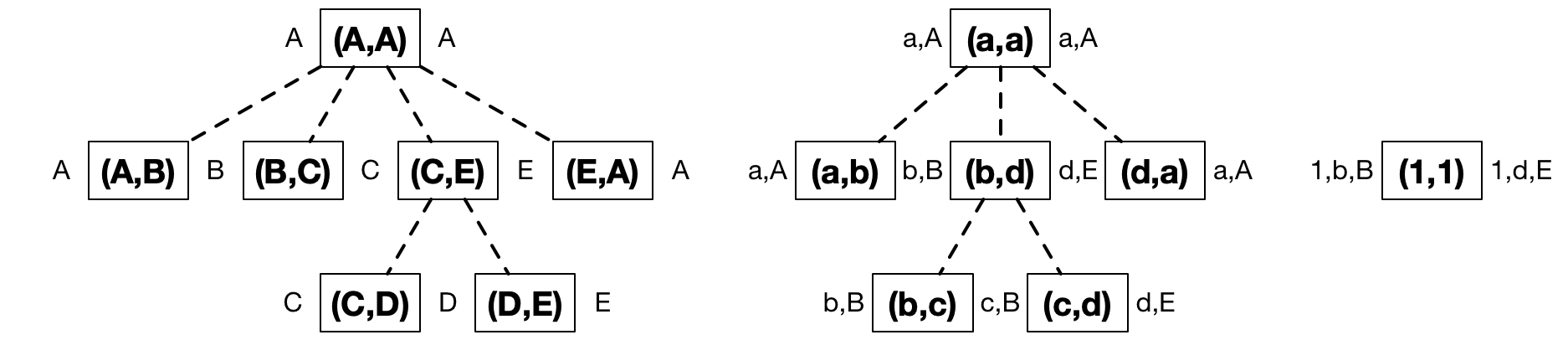}
\caption{figure}{Trees with slice boundaries}
\label{fig:kout_trees_bndrs}
\end{figure}

With the definition for boundaries in place, we can now formally define slices. Let $v$ be a level $i$ tree vertex for $i \geq 1$ with label $(x,y)$. We denote the slice of $v$ as $\slice(v)$, which is defined as follows:

\begin{enumerate}[(S1)]
\item If $v$ represents a level $i$ face $f$ that does not enclose any level $i+1$ components, then $\slice(v)$ is the union of the slices of the children of $v$, together with $(x,y)$ if $(x,y)$ is an edge.
\item If $v$ represents a level $i$ face $f$ that encloses a level $i+1$ component $C$, then $\slice(v)$ is the slice of the root of the tree that represents $C$, together with $(x,y)$ if $(x,y)$ is an edge.
\item If $v$ is a level 1 leaf, then $\slice(v)$ is simply the edge $(x,y)$.
\item Suppose $v$ is a level $i$ leaf for $i > 1$. Let $f$ be the level $i-1$ face that encloses $(x,y)$, and let $u_j$ for $1 \leq j \leq t$ be the children of $\vertex(f)$ (the tree vertex representing $f$). Let each $u_j$ have label $(z_j, z_{j+1})$.
\begin{enumerate}[(i)]
   \item If $\lbn(v) \neq \rbn(v)$, then $\slice(v)$ is the union of each $\slice(u_j)$ for $\lbn(v) \leq j < \rbn(v)$, along with any edges from $x$ or $y$ to $z_j$ for $\lbn(v) \leq j \leq \rbn(v)$ and the edge $(x,y)$.
   \item If $\lbn(v) = \rbn(v) = b$, then $\slice(v)$ is the left boundary of $\slice(u_b)$ (including edges between consecutive boundary vertices), along with any edges from $x$ or $y$ to $z_b$ and the edge $(x,y)$.
\end{enumerate}
\end{enumerate}

The slices for $(a,b)$, $(b,d)$, and $(d,a)$ are given in Figure \ref{fig:slice_ex}, where boundary vertices are connected by dashed lines if they are not connected already. Since $(a,b)$ is a level 2 leaf enclosed by the face with tree vertex $(A,A)$, and $\lbn((a,b)) = 1 \neq 2 = \rbn((a,b))$, we see that $\slice((a,b))$ is equal to $\slice((A,B))$ along with the edges connecting $a$ to $A$ and $B$, where $\slice((A,B))$ is the edge $(A,B)$. Similarly, $\slice((d,a))$ is equal to $\slice((E,A))$ along with the edges  connected $d$ and $a$ to $E$ and $A$. Since $(b,d)$ is a level 2 tree vertex representing a face that encloses the level 3 component represented by $(1,1)$, its slice is equal to $\slice((1,1))$ along with the edge $(b,d)$. The slice for $(1,1)$ is obtained by taking the union of $\slice((b,c))$ and $\slice((c,d))$ and adding the edges connecting 1 to $b$ and $d$.

\begin{figure}[h]
\centering
\includegraphics[width=.8\textwidth]{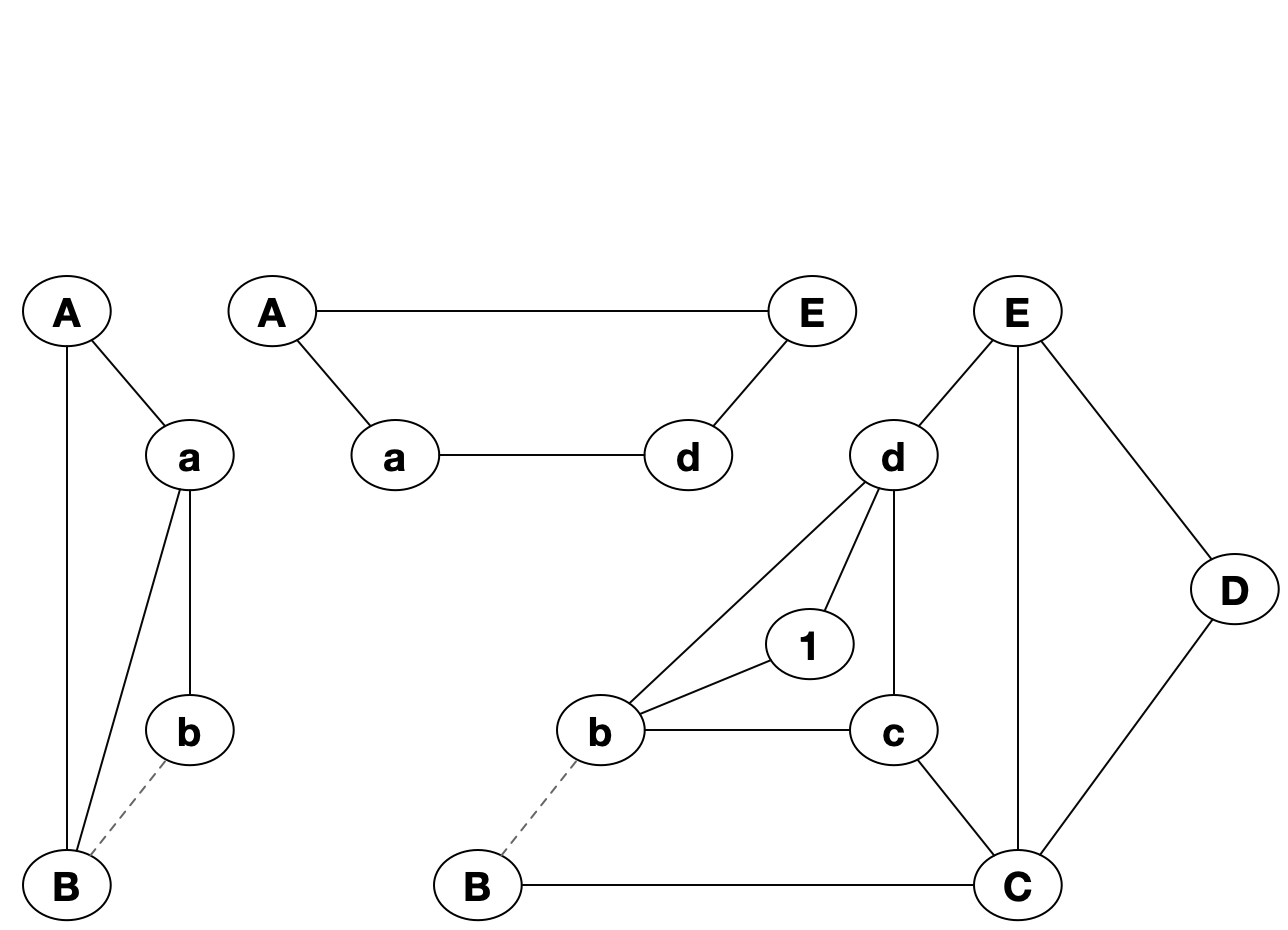}
\caption{figure}{Slices for $(a,b)$, $(b,d)$, and $(d,a)$.}
\label{fig:slice_ex}
\end{figure}

\subsubsection{Dynamic Program}
In this section, we detail the dynamic program that solves the densest $k$ subgraph problem for $b$-outerplanar graphs. Note that \textit{adjust, merge, extend} and \textit{contract} are original whereas the \textit{table} procedure is given by Baker~\cite{baker1994}. The pseudocode for these procedures can be found in Appendix~\ref{app:b_outerplanar_algs}. The program essentially follows the same recursive structure that we used to build the slices since it creates a table for each slice. The table for a level $i$ slice consists of $2^{2i}$ entries: one entry for each subset of the boundary vertices (recall that the left and right boundaries for a level $i$ slice contain exactly $i$ vertices each, for a total of $2i$ boundary vertices). An entry contains a number for each value of $k' = 0 \dots k$, where this number is the maximum number of edges over all subgraphs of the slice of exactly $k'$ vertices that contain the corresponding subset of the boundary vertices.

The main procedure of the program is \textit{table} (shown in Figure \ref{alg:table}), which takes as input a tree vertex $v = (x,y)$. This procedure contains a conditional branch corresponding to each of the rules S1 -- S4. The branch corresponding to S1 handles the case when $v$ represents a face $f$ that does not enclose a level $i+1$ component. In this case, the procedure makes a recursive call to \textit{table} on each child of $v$ and merges the resulting tables together. Note that this is analogous to how rule S1 creates $\slice(v)$ by taking the union of the slices for the children of $v$.

The second conditional branch (corresponding to rule S2) handles the case when $v$ represents a face $f$ that encloses a level $i+1$ component $C$. In this case, the procedure makes a recursive call to \textit{table} on the tree vertex that represents $C$. The resulting table is then passed to the \textit{contract} function, which turns a level $i+1$ table into a level $i$ table by removing the level $i+1$ boundary vertices from the table, and the contracted table is then returned.

The third conditional branch (corresponding to rule S3) handles the case when $v$ is a level 1 leaf. In this case, the procedure returns a template table that works for all level 1 leaf vertices, since any level 1 leaf represents a level 1 exterior edge of the graph.

The fourth conditional branch, which corresponds to rule S4, is slightly more complicated. This branch handles the case when $v$ is a level $i>1$ leaf vertex. The idea is to break up the slice of $v$ into subslices, compute the table for an initial subslice, and extend this table by merging the tables for the subslices clockwise and counterclockwise from the initial subslice. These subslices have their own respective subboundaries. In the second case of rule S4, where $\lbn(v) = \rbn(v)$, no subslices can be created, so the procedure will return the table for the whole slice by passing the vertex to the \textit{create} procedure, which effectively applies brute force to create the table for the subslice determined by the second parameter $p$ (this is explained in more detail below). 

In the case that $\lbn(v) \neq \rbn(v)$, we can create tables for subslices. Since we are dealing with a planar graph, there exists a level $i-1$ vertex $z_p$ such that all level $i-1$ vertices other than $z_p$ that are adjacent to $x$ are clockwise from $z_p$, and all level $i-1$ vertices other than $z_p$ that are adjacent to $y$ are counterclockwise from $z_p$. Here, $z_p$ is the only level $i-1$ vertex in $\slice(v)$ that can be adjacent to both $x$ and $y$ (although it might not be adjacent to either). So, we construct an initial level $i$ table for the subslice corresponding to $z_p$ using \textit{create}, and then we make as many calls as necessary to the \textit{merge} and \textit{extend} procedures (described below) to extend the table on one side with subslices constructed from the vertices adjacent to $x$, and on the other side with subslices constructed from the vertices adjacent to $y$.

\begin{figure}
\begin{minipage}{.45\textwidth}
\centering
\includegraphics[width=\textwidth]{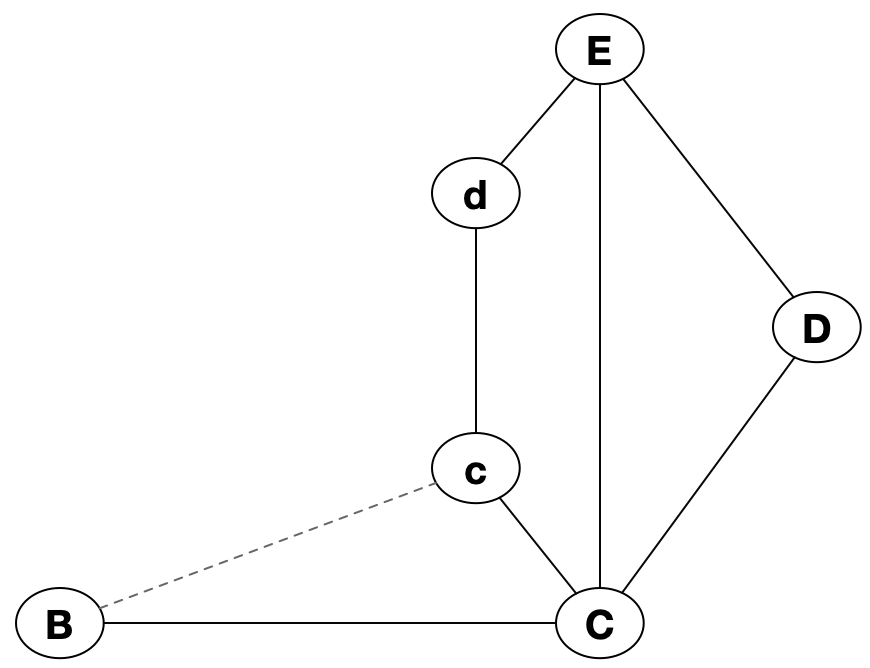}
\end{minipage}
\begin{minipage}{.55\textwidth}
\centering
\includegraphics[width=\textwidth]{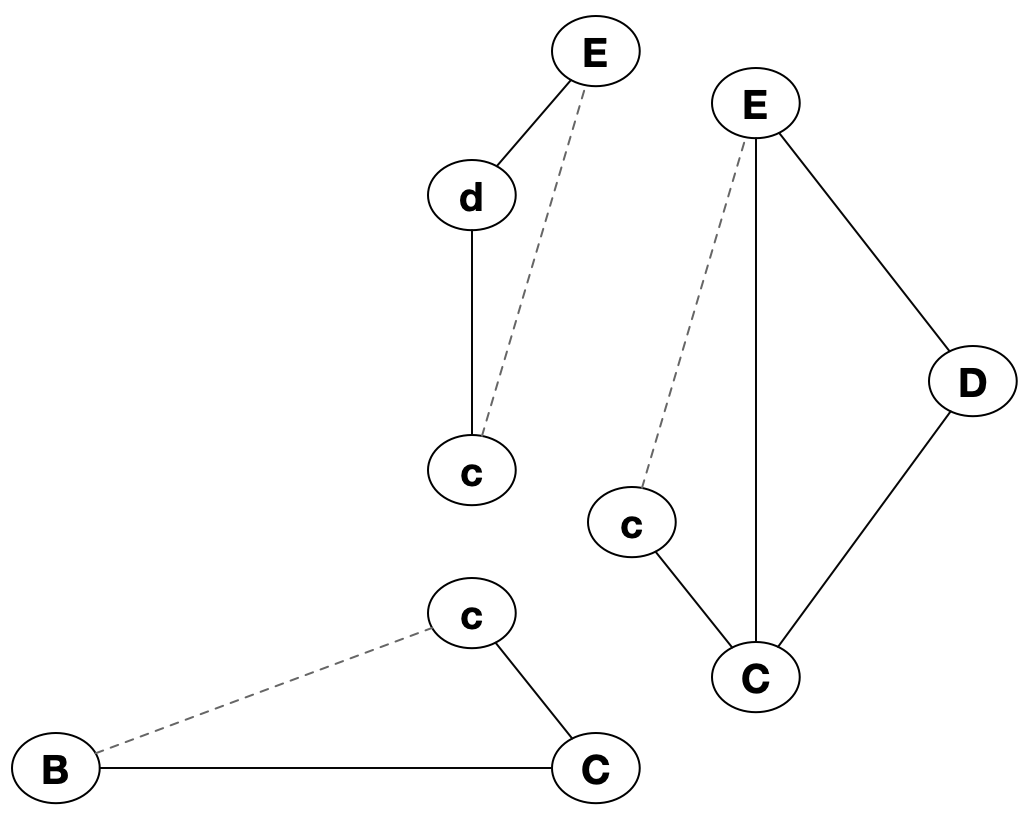}
\end{minipage}
\caption{figure}{The slice for $(c,d)$ is split into subslices for computing tables.}
\label{fig:cd_subslices}
\end{figure}

For example, Figure \ref{fig:cd_subslices} shows how we split $\slice((c,d))$ into subslices. The algorithm finds that $E$ is a level 1 vertex such that the level 1 vertices adjacent to $c$ are clockwise from $E$ and the level 1 vertices adjacent to $d$ are counterclockwise to $E$ (of which there are none). A call to \textit{create} constructs the table for the initial subslice with subboundaries $c,E$ and $d,E$. We then merge the other subslices in a clockwise fashion, first merging the table for the subslice with subboundaries $c,C$ and $c,E$, and then merging the table for the subslice with subboundaries $c,C$ and $c,B$.

The \textit{adjust} procedure, given in Figure \ref{alg:adjust}, takes as input a table $T$, which represents a slice with left boundary $L$ and right boundary $R$. Let $x$ and $y$ be the highest level boundary vertices in $L$ and $R$, respectively. This procedure checks if $x \neq y$, and if so, adds 1 to the table entry where $x$ and $y$ are both included. Unlike in Baker's original procedure, the case when $x = y$ is not handled in this procedure and is instead handled in \textit{merge}.


The \textit{merge} procedure, given in Figure \ref{alg:merge}, takes as input two tables $T_1$ and $T_2$ such that the right boundary of the slice that $T_1$ represents is the same as the left boundary of the slice that $T_2$ represents. Let the left and right boundaries of the slice that $T_1$ represents be $L$ and $M$, and let the left and right boundaries of the slice that $T_2$ represents be $M$ and $R$. The resulting table $T$ returned by \textit{merge} has left and right boundaries $L$ and $R$. This procedure constructs $T$ by creating an entry for each subset $A$ of $L \cup R$. As stated earlier, each entry contains a number for each value of $k'$, where $k'$ ranges from 0 to the number of vertices in the union of the two slices represented by $T_1$ and $T_2$, and where the number is the maximum number of edges over all subgraphs of exactly $k'$ vertices containing $A$. Each of these individual subgraphs contains a different subset $B$ of $M$.


The \textit{contract} procedure, given in Figure \ref{alg:contract}, changes a level $i+1$ table $T$ into a level $i$ table $T'$. Here, $T$ represents the slice for $(z,z)$, where $(z,z)$ is the root of a tree corresponding to a level $i+1$ component $C$ contained in a level $i$ face $f$, and $T'$ is the table for the slice of $\vertex(f)$. Let $S = \slice((z,z))$ and let $S' = \slice(\vertex(f))$. Let the left and right boundaries of $S'$ be $L$ and $R$, respectively. Then the left and right boundaries of $S$ are of the form $z, L$ and $z, R$ respectively. For each subset of $L \cup R$ and for each value of $k'$, $T$ contains two numbers: one that includes $z$ and one that does not include $z$. So, for each subset $A$ of $L \cup R$ and each value of $k'$, we set $T(A, k')$ equal to the larger of these two numbers.


The \textit{create} procedure takes as input a leaf vertex $v = (x,y)$ in a tree that corresponds to a level $i+1$ component enclosed by a face $f$, and a number $p \leq t+1$, where the children of $\vertex(f)$ are $u_1, u_2, \dots, u_t$. This procedure simply applies brute force to create the table for the subgraph containing the edge $(x,y)$, the subgraph induced by the left boundary of $u_p$ if $p \leq t$ or the right boundary of $u_{p-1}$ if $p = t + 1$, and any edges from $x$ or $y$ to the level $i$ vertex of this boundary (this construction is similar to the construction of slices, hence the term subslice is appropriate). Due to the simplicity of this function, it will not be explicitly written here.

Lastly, the \textit{extend} procedure, given in Figure \ref{alg:extend}, takes as input a level $i+1$ vertex $z$ and a table $T$ representing a level $i$ slice, and produces a table $T'$ for a level $i+1$ slice. Let $L$ and $R$ be the boundaries for the level $i$ slice represented by $T$. The boundaries for the new slice will be $L \cup \{z\}$ and $R \cup \{z\}$. For each subset $A$ of $L \cup R$ and each value of $k'$, the new table $T'$ contains two values: one that includes $z$ and one that does not include $z$. The entries $T'(A,k')$ which do not include $z$ can simply be set to their original values in $T$. For the entries $T'(A \cup \{z\},k')$ that do include $z$, we first check that $T(A,k'-1)$ is not undefined. If this is the case, we set $T'(A \cup \{z\},k')$ as $T(A,k'-1)$ plus the number of edges between $z$ and every vertex in $A$.


We claim that calling the above algorithm on the root of the level 1 tree results in a correct table for the slice of the root and that this slice is actually the entire graph. Since the level 1 root is of the form $(x,x)$, its left and right boundaries are both equal to $x$. Thus, the table for this root has exactly 4 rows, and two of these rows are invalid since they attempt to include one copy of $x$ and not the other (which is nonsensical). The two remaining rows have numbers corresponding to the maxmimum number of edges in subgraphs of size exactly $k'$ for $k' = 0,\dots,k$. Taking the maximum between the two numbers corresponding to when $k' = k$ gives us the size of the densest $k$ subgraph.

\subsubsection{Proof of Correctness}
Baker's original paper provides us with a proof that the procedure \textit{table} is correct under the assumption that the procedures \textit{adjust}, \textit{merge}, \textit{contract}, \textit{create}, and \textit{extend} are implemented correctly. While we have to slightly modify this proof so that it works for our implementation, the bulk of the work is in proving that the above five procedures are in fact correct for the densest $k$ subgraph problem. Since \textit{create} is just a brute-force algorithm, its implementation is trivial and is omitted from the proof.

\begin{lemma}
The procedures \textit{adjust}, \textit{merge}, \textit{contract}, and \textit{extend} defined above are implemented correctly.
\end{lemma}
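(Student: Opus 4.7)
The plan is to prove each of the four procedures correct in isolation, using the semantic invariant that a table $T$ for a slice $S$ with boundaries $L, R$ satisfies $T(A, k')$ equals the maximum number of edges in any subgraph of $S$ that has exactly $k'$ vertices and whose intersection with $L \cup R$ is $A$ (and $\emptyset$ when no such subgraph exists). I would order the procedures from easiest to hardest, since both \textit{contract} and \textit{extend} consume tables produced by \textit{merge} and \textit{adjust}, but the invariant makes the four proofs essentially independent.

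First I would dispatch \textit{adjust}. Here the only work the procedure does is to account for the boundary edge $(x,y)$ between the highest-level vertices of $L$ and $R$ in the case $x \neq y$; this edge is not part of any child slice and is not added anywhere else in the table construction, and it contributes exactly one edge to any subgraph that contains both endpoints, so incrementing precisely the entries $T(A, k')$ with $\{x,y\} \subseteq A$ preserves the invariant. Next I would handle \textit{contract}, which collapses a level-$i+1$ table to a level-$i$ table by projecting out the apex vertex $z$. Since any subgraph of $\slice((z,z))$ either contains $z$ or does not, and the projection changes neither the $k'$ count nor the intersection with $L \cup R$, taking the entrywise maximum over the $z$-in and $z$-out rows correctly maximizes edges subject to the level-$i$ constraint.

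For \textit{extend}, the key observation is that any subgraph of the new slice with the extended boundaries decomposes uniquely according to whether $z$ is included. If $z \notin A$, the subgraph sits entirely in the old slice and its edge count is exactly $T(A, k')$. If $z \in A$, removing $z$ yields a subgraph of the old slice with $k' - 1$ vertices, and the edges incident to $z$ in the new subgraph are precisely those to vertices in $A \setminus \{z\}$; these are counted by the summand the pseudocode adds on top of $T(A \setminus \{z\}, k'-1)$. It is at this step that I would double-check that every edge from $z$ to $A \setminus \{z\}$ is exterior to the old slice, so no edge is double counted against $T$.

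The main obstacle will be \textit{merge}. Here I would argue by a decomposition lemma: every subgraph $H$ of the merged slice with $H \cap (L \cup R) = A$ and $|V(H)| = k'$ decomposes uniquely as $H_1 \cup H_2$ where $H_j$ is the induced subgraph on $V(H) \cap V(S_j)$, determined by the middle boundary intersection $B = V(H) \cap M$, with $|V(H_1)| + |V(H_2)| - |B| = k'$ and $E(H) = E(H_1) \sqcup E(H_2)$ (disjointness follows because slices only share the vertices of $M$, and edges between $M$-vertices are attributed to a single slice by convention). Summing $T_1(A_1, k_1) + T_2(A_2, k_2)$ with $A_j = (A \cap L_j) \cup B$ and $k_1 + k_2 = k' + |B|$ and maximizing over all $B$ therefore computes the desired maximum, provided the pseudocode correctly handles the case $x = y$ where $L$ and $R$ collapse to a single top-level vertex shared across both input tables (the case \textit{adjust} delegates back to \textit{merge}): in this case $|B|$ must be increased by one whenever the shared vertex is chosen, to avoid counting it twice toward $k'$. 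I would verify this bookkeeping step and confirm that the edge $(x,z)$ bonus in the pseudocode is applied exactly when the newly adjacent pair of boundary vertices is both present and the edge exists, completing the invariant preservation.
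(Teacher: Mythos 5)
Your overall strategy --- stating the table invariant ($T(A,k')$ is the maximum edge count over subgraphs of the slice with exactly $k'$ vertices whose boundary trace is $A$) and then verifying each of the four procedures against it independently --- is exactly the paper's strategy, and your arguments for \textit{adjust}, \textit{contract}, and \textit{extend} match the paper's almost line for line. The one place where your argument genuinely diverges, and where it would fail as written, is the edge accounting in \textit{merge}. You assert that $E(H)=E(H_1)\sqcup E(H_2)$ because ``edges between $M$-vertices are attributed to a single slice by convention,'' and accordingly your candidate value is $T_1(A_1,k_1)+T_2(A_2,k_2)$ with no correction term. That convention is not the one this paper uses: by the boundary and slice constructions (B1)--(B2) and (S1)--(S4), the right boundary of $S_1$ \emph{is} the left boundary of $S_2$, and each slice carries the edges between its consecutive boundary vertices, so a real edge joining two vertices of $M$ is counted by \emph{both} $T_1$ and $T_2$. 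The \textit{merge} pseudocode therefore computes $T_1((A\cap L)\cup B,k_1)+T_2((A\cap R)\cup B,k_2)-m$, where $m$ is the number of edges between vertices of $B$, and the paper's proof explicitly justifies the $-m$ as removing these double-counted edges. Under your stated convention the subtraction would be spurious; under the paper's actual construction your sum without the subtraction overcounts. Either way, the decomposition lemma as you state it does not verify the procedure that is actually given, and this is the step you must repair.

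The remaining discrepancies are cosmetic. Your remark that $|B|$ must be increased by one when the shared top-level vertex is chosen in the $x=y$ case is the same bookkeeping as the pseudocode's constraint $k_1+k_2-|B|-n=k'$ with $n=1$. Your closing sentence about verifying an ``edge $(x,z)$ bonus'' inside \textit{merge} conflates the outerplanar merge of Section 2 with the $b$-outerplanar one: in the latter, the edge between distinct top-level boundary vertices is added by \textit{adjust}, not by \textit{merge}. Your extra check in \textit{extend} that no edge from $z$ into $A$ is already counted by $T$ is a sound point of care that the paper leaves implicit. None of these affects the soundness of your plan, but the $M$-edge double counting does need to be fixed before the \textit{merge} case is complete.
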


\begin{proof}
First, we show that a call to the  \textit{adjust} procedure on a table $T$ results in a correctly adjusted table $T'$; that is, $T'$ accounts for when the top level boundary vertices of the corresponding slice are equal or not. Let $L$ and $R$ be the left and right boundaries, respectively, for the sice that $T$ represents, and let $x$ and $y$ be the highest level vertices in $L$ and $R$, respectively. Since the case when $x = y$ is already handled in the \textit{merge} procedure, the \textit{adjust} procedure only needs to check when $x \neq y$, and in particular when $(x,y)$ is an edge in the graph. In this case, any entry in $T$ that includes both $x$ and $y$ should be incremented by one in order to count this edge. Thus, \textit{adjust} is correct.

Now, we show that a call to the \textit{merge} procedure on two tables $T_1$ and $T_2$ representing slices $S_1$ and $S_2$, respectively, results in a correct table $T$ representing the slice $S = S_1 \cup S_2$. Let $L$ and $M$ be the left and right boundaries of $S_1$ and let $M$ and $R$ be the left and right boundaries of $S_2$, where we know the right boundary of $S_1$ is the same as the left boundary of $S_2$ by assumption. Recall that a table has an entry for each subset of the boundaries of the slice it represents. This means $T_1$ has an entry for each subset of $L \cup M$, $T_2$ has an entry for each subset of $M \cup R$, and we want $T$ to have an entry for each subset of $L \cup R$. So, for a subset $A$ of $L \cup R$, we show that the \textit{merge} procedure calculates the maximum number of edges over all subgraphs of $S$ with exactly $k'$ vertices containing $A$, where $k'$ ranges from 0 to $\max_{T_1}k' + \max_{T_2}k' - |M|$. This upper bound is obtained by adding the number of vertices in $S_1$ and $S_2$ (which are given by the largest values of $k'$ in $T_1$ and $T_2$) and subtracting the number of vertices shared by the two slices, which by definition are the vertices in $M$. Now, fix $k'$. In order to loop over every subgraph of $S$ containing $A$ with exactly $k'$ vertices, the \textit{merge} procedure essentially loops over every pair of subgraphs of $S_1$ and $S_2$ such that the number of vertices in the union of these two subgraphs is exactly $k'$. This is accomplished by looping over every subset $B$ of $M$, and referring to the table entries in $T_1$ and $T_2$ corresponding to the subsets $(A \cap L) \cup B$ and $(A \cap R) \cup B$. Fix the subset $B$ of $M$. Extra care needs to be taken in determining which values of $k_1$ and $k_2$ can be used in $T_1$ and $T_2$, respectively, so that the union of the corresponding subgraphs of $S_1$ and $S_2$ contains exactly $k'$ vertices. The number of vertices in this union should be $k_1 + k_2 - |B|$ since the two subgraphs only share the vertices in $B$. However, we need to account for when the top level vertices in $L$ and $R$ are equal; the procedure handles this by subtracting 1 from the above vertex count if this is the case. Then for particular values of $k_1$ and $k_2$, the procedure keeps track of the number of edges in the described union, which is obtained by adding the table values $T_1((A \cap L) \cup B, k_1)$ and $T_2((A \cap R) \cup B, k_2)$, and then subtracting the number of double counted edges, which are edges between vertices in $B$. After calculating all of these values, the procedure sets $T(A, k')$ to be the maximum such value, which must be the maximum number of edges in a subgraph containing $A$ with exactly $k'$ vertices. If all of the values were undefined, the procedure sets $T(A, k')$ to be undefined. Thus, the \textit{merge} procedure is correct.

Next, we show that a call to the \textit{contract} procedure on a level $i+1$ table $T$ results in a correct level $i$ table $T'$. Here, $T$ represents the slice $S$ for a level $i+1$ component $C$ contained in the level $i$ face $f$, and we want $T'$ to represent the slice $S'$ for $f$. Let $L$ and $R$ be the left and right boundaries, respectively, for $S'$, and let $(z,z)$ be the label for the root of the tree corresponding to $C$. Note that the left and right boundaries for $S$ are by definition $L \cup \{z\}$ and $R \cup \{z\}$, respectively. So, we want to show that \textit{contract} has a correct entry for each subset of $L \cup R$. For each subset $A$ of $L \cup R$ and for each value of $k'$, the \textit{contract} procedure sets $T'(A, k')$ to be the maximum value between $T(A, k')$ and $T(A \cup \{z\}, k')$. Note that $T(A, k')$ is by definition the maximum number of edges over all subgraphs of $S$ containing $A$ with exactly $k'$ vertices, and $T(A \cup \{z\}, k')$ is the maximum number of edges over all subgraphs of $S$ containing $A$ as well as $z$ with exactly $k'$ vertices. The maximum between these two values must then be the maximum number of edges over all subgraphs of $S'$ containing $A$ with exactly $k'$ vertices. Thus, the \textit{contract} procedure is correct.

Lastly, we show that a call to the \textit{extend} procedure on a table $T$ representing a level $i$ slice $S$ and a level $i+1$ vertex $z$ results in a correct table $T'$ representing the slice $S \cup \{z\}$. Let $L$ and $R$ be the left and right boundaries for $S$, respectively. We will show that \textit{extend} produces correct entries for each subset of $L \cup R \cup \{z\}$. Clearly, the entry in $T'$ for a subset that does not include $z$ should equal the entry in $T$ for the same subset, which is precisely how \textit{extend} initializes $T'$. So, we only need to focus on the entries that include $z$. Let $A$ be a subset of $L \cup R$, and fix $k'$. The procedure first checks that $T(A, k'-1)$ is not undefined, for if it were undefined, this would mean that there is no value for a subgraph of $S$ containing $A$ with exactly $k'-1$ vertices, and hence there cannot be a value for the same subgraph after adding in $z$ with exactly $k'$ vertices. We then let $m$ be the number of edges between $z$ and every vertex in $A$. The maximum number of edges over all subgraphs of $S'$ containing $A \cup \{z\}$ with exactly $k'$ vertices should then be the maximum number of edges over all subgraphs of $S$ containing $A$ with exactly $k'-1$ vertices, plus $m$. This is exactly what \textit{extend} sets as the value for $T'(A \cup \{z\}, k')$, hence the procedure must be correct.
\end{proof}

\begin{theorem}
A call of \textit{table} on the root of the level 1 tree results in a correct table for the corresponding slice, which is in fact the entire graph.
\end{theorem}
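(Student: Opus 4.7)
The plan is to prove the theorem by structural induction on tree vertices, paralleling the recursive definition of $\slice$ in rules S1--S4. The inductive claim will be: for every tree vertex $v$, a call to $\algtable(v)$ returns a correct table for $\slice(v)$, in the sense that each entry records, for each subset of the boundary vertices of $\slice(v)$ and each value of $k'$, the maximum number of edges over all subgraphs of $\slice(v)$ with exactly $k'$ vertices that contain the specified boundary subset. Once this inductive claim is established, I will separately verify that when $v$ is the root of the level 1 tree, $\slice(v)$ is literally the entire input graph, so that the final table answers the densest $k$ subgraph problem (by reading off the maximum over the two relevant rows at column $k'=k$, as noted in the paragraph preceding the theorem).

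For the inductive step I would take each of the four conditional branches of $\algtable$ in turn and match it against the corresponding slice rule. The base case is S3, where $v$ is a level 1 leaf; here the procedure returns the hard-coded single-edge template, which is exactly the table for the edge $\slice(v) = (x,y)$, as already verified in Lemma~\ref{outer_lemma1}. For S1, the procedure recursively calls $\algtable$ on each child of $v$ and folds the results together with $\algmerge$ (and $\algadjust$ if needed to account for $(x,y)$ being an edge); the inductive hypothesis gives correct child tables, and by the previous lemma $\algmerge$ and $\algadjust$ correctly combine them into the table for the union of the children's slices together with $(x,y)$, which is precisely $\slice(v)$ under S1. For S2, the recursive call produces a correct level $i+1$ table for the root of the subtree representing $C$, and $\algcontract$ (again correct by the previous lemma) converts it into the level $i$ table for $\slice(v)$, which is exactly what rule S2 describes.

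The main obstacle is the S4 branch, handling level $i > 1$ leaves, because the algorithmic construction does not literally mirror S4(i)--(ii); instead it breaks $\slice(v)$ at the distinguished level $i-1$ vertex $z_p$ (the unique candidate adjacent to both $x$ and $y$) into an initial subslice plus clockwise and counterclockwise subslices assembled by repeated $\algcreate$, $\algextend$, and $\algmerge$ calls. I would argue this in two substeps. First, a combinatorial lemma showing that the union of the initial subslice and the clockwise/counterclockwise subslices indexed by $\lbn(v),\ldots,\rbn(v)$ is exactly the slice described in S4, and that no graph edges are double-counted or omitted (here the planarity-based existence of $z_p$ is crucial, since it guarantees that the left and right collections of subslices are edge-disjoint). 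Second, an induction on the number of $\algextend$/$\algmerge$ calls: $\algcreate$ is correct by brute force, $\algextend$ correctly augments a level $i$ table with a new level $i+1$ boundary vertex, and $\algmerge$ correctly glues adjacent subslices; by the previous lemma each operation preserves correctness, so the final table represents $\slice(v)$. The S4(ii) case where $\lbn(v) = \rbn(v)$ is a degenerate instance in which only the $\algcreate$ call fires, and this matches rule S4(ii) directly.

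Finally, for the root of the level 1 tree, I would observe that by construction rules S1 and S2 propagate the slice upward to cover every face, every edge of each level, and every enclosed higher-level component, so by a straightforward induction on the depth of nesting the slice of the level 1 root contains every vertex and every edge of $G$. Combined with the inductive claim, this yields the theorem.
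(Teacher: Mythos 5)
Your plan is correct and is essentially the paper's own argument: the paper's proof simply defers to Lemma 2 of Baker~\cite{baker1994}, which is exactly the structural induction you describe --- matching each branch of \textit{table} against the slice rules S1--S4, invoking the correctness of \textit{adjust}, \textit{merge}, \textit{contract}, \textit{create}, and \textit{extend} from the preceding lemma, and handling the S4 subslice decomposition via the distinguished vertex $z_p$. You have reconstructed that induction in full rather than citing it, which makes the argument more self-contained but does not change the route.
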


\begin{proof}
The argument is essentially the same as Lemma 2 in \cite{baker1994}, which focuses on the maximum independent set problem. The only difference is that the \textit{adjust} procedure in the maximum independent set scenario deals with the case $x = y$, where $(x,y)$ is the label of the input tree vertex $v$, whereas our implementation for densest $k$ subgraph deals with this case in the \textit{merge} procedure. Since this case is handled regardless, the proof still follows.
\end{proof}

\subsubsection{Performance Bounds}
Constructing the trees and computing the boundaries of slices requires linear time. By Lemma 1 in \cite{baker1994}, the \textit{table} procedure is called once for each tree vertex. Recall that for a level $i$ tree, a leaf vertex corresponds to an exterior edge of a level $i$ face, and a tree vertex which is neither a leaf nor the root corresponds to a level $i$ face, which must have at least one interior edge. The root of a level $i$ tree corresponds to a level $i$ component, which for $i > 1$ must be associated with at least one edge connecting the level $i$ component to the level $i-1$ component (for the level 1 root, there may be no associated edge). Thus, the number of tree vertices over all trees is at most the number of edges in the $b$-outerplanar graph (plus one to account for the level 1 root). Since the number of edges in a planar graph is linear in the number of vertices, the number of calls on \textit{table} is linear in the number of vertices as well.

Each table constructed in the dynamic program has one row for each subset of the left and right boundary vertices, of which there can be at most $2b$ for a $b$-outerplanar graph; that is, $b$ vertices in each of the left and right boundaries. So, there are at most $4^b$ rows in a table. In each row, there are at most $k$ entries for the given input $k$. Since each call to \textit{adjust}, \textit{extend}, or \textit{contract} requires iterating over a table, the time complexity for each is $O(k4^b)$. The \textit{merge} procedure has four nested loops, where the first (outermost) iterates at most $4^b$ times, the second at most $k$ times, the third at most $2^b$ times, and the fourth (innermost) at most $k$ times. Thus, the time complexity of \textit{merge} is $O(k^2 8^b)$. Lastly, the time complexity of \textit{create} must be $O(kb4^b)$ since it constructs an entire table by brute force. Each of these procedures is called at most once for each recursive call of \textit{table}. Therefore, the time complexity must be bounded by \textit{merge}, hence the time complexity of \textit{table} is $O(k^2 8^b n)$.

\section{Polynomial Time Approximation Scheme and Future Work}\label{sec:PTAS}
When searching for a polynomial time approximatin scheme (PTAS) for planar graph problems, one often attempts to use Baker's technique. For this technique, we assume that we have a dynamic programming solution to the given problem in the $b$-outerplanar case. This technique works as follows: Given a planar graph $G$ and a positive number $\epsilon$, let $b = \frac{1}{\epsilon}$. Perform a breadth-first search on $G$ to obtain a BFS tree $T$, and number the levels of $T$ starting from the root, which is level 0. For each $i = 0,1,\dots,b-1$, let $G_i$ be the subgraph of $G$ induced by the vertices on the levels of $T$ that are congruent to $i$ modulo $b$. Note that $G_i$ is likely disconnected. Let the connected components of $G_i$ be $G_{i,0}, G_{i,1}, \dots$. Since each $G_{i,j}$ is $(b-1)$-outerplanar by construction, we may run the given dynamic program on each $G_{i,j}$ and combine the solutions over all $j$ to obtain a solution $S_i$ for the graph $G_i$. We then take the maximum $S_i$, denoted $S$, as our approximate solution.

We hypothesize that this technique will not work for the densest $k$
subgraph problem on planar graphs. The reason is that by having a
potentially large number of disconnected components, the approximate
solution cannot be guaranteed to be within the bound given by
$\epsilon$. Suppose we have an approximate solution $S$ for the
densest $k$ subgraph problem on some planar graph $G$. Note that $S$
is the exact solution for the graph $G_i$ for some $i$, meaning $S$
does not account for any vertices on the levels of $T$ which are
congruent to $i$ modulo $b$. While $S$ could still be very dense, it
is possible that most (if not all) of the edges in $G$ are between
vertices in different levels of $T$. This allows for the possibility
that no matter which $S_i$ is chosen as the maximum, each graph $G_i$
is missing too many edges to closely approximate the optimal
solution. For future work, it would be of great interest for one to
prove that such a construction is impossible using Baker's technique.

\section{Acknowledgements}
We would like to express our sincere thanks to Samuel Chase for his
collaboration on our initial explorations of finding a PTAS for the
densest $k$ subgraph problem.

\bibliography{ptas.bib}

\appendix

\section{A Complete Set of Tables for the Example in Figure~\ref{fig:outerplanar_tree construction}}\label{app:outerplanar_tables}
As mentioned in Section~\ref{sec:outerplanar}, the tables for each leaf in the tree are the same. We list them here for completeness.

$Table(P) = T_{(c,b)} = $
\begin{tabular}{|l|l|l|l|l|}
\hline
$c$ & $b$ & $k=0$ & $k=1$ & $k=2$ \\ \hline
0   & 0   &   0                      &   $\emptyset$    &   $\emptyset$    \\ \hline
0   & 1   &   $\emptyset$    &    0                     &    $\emptyset$   \\ \hline
1   & 0   &   $\emptyset$    &    0                     &    $\emptyset$   \\ \hline
1   & 1   &   $\emptyset$    &    $\emptyset$   &   1 \\ \hline
\end{tabular}

\vspace{.75cm}

$Table(L) = T_{(b,a)} = $
\begin{tabular}{|l|l|l|l|l|}
\hline
$b$ & $a$ & $k=0$ & $k=1$ & $k=2$ \\ \hline
0   & 0   &   0                      &   $\emptyset$    &   $\emptyset$    \\ \hline
0   & 1   &   $\emptyset$    &    0                     &    $\emptyset$   \\ \hline
1   & 0   &   $\emptyset$    &    0                     &    $\emptyset$   \\ \hline
1   & 1   &   $\emptyset$    &    $\emptyset$   &   1 \\ \hline
\end{tabular}

\vspace{.75cm}

$Table(M) = T_{(a,e)} = $
\begin{tabular}{|l|l|l|l|l|}
\hline
$a$ & $e$ & $k=0$ & $k=1$ & $k=2$ \\ \hline
0   & 0   &   0                      &   $\emptyset$    &   $\emptyset$    \\ \hline
0   & 1   &   $\emptyset$    &    0                     &    $\emptyset$   \\ \hline
1   & 0   &   $\emptyset$    &    0                     &    $\emptyset$   \\ \hline
1   & 1   &   $\emptyset$    &    $\emptyset$   &   1 \\ \hline
\end{tabular}

\vspace{.75cm}

$Table(N) = T_{(e,f)} = $
\begin{tabular}{|l|l|l|l|l|}
\hline
$e$ & $f$ & $k=0$ & $k=1$ & $k=2$ \\ \hline
0   & 0   &   0                      &   $\emptyset$    &   $\emptyset$    \\ \hline
0   & 1   &   $\emptyset$    &    0                     &    $\emptyset$   \\ \hline
1   & 0   &   $\emptyset$    &    0                     &    $\emptyset$   \\ \hline
1   & 1   &   $\emptyset$    &    $\emptyset$   &   1 \\ \hline
\end{tabular}

\vspace{.75cm}

$Table(O) = T_{(f,g)} = $
\begin{tabular}{|l|l|l|l|l|}
\hline
$f$ & $g$ & $k=0$ & $k=1$ & $k=2$ \\ \hline
0   & 0   &   0                      &   $\emptyset$    &   $\emptyset$    \\ \hline
0   & 1   &   $\emptyset$    &    0                     &    $\emptyset$   \\ \hline
1   & 0   &   $\emptyset$    &    0                     &    $\emptyset$   \\ \hline
1   & 1   &   $\emptyset$    &    $\emptyset$   &   1 \\ \hline
\end{tabular}

\vspace{.75cm}

$Table(R) = T_{(g,d)} = $
\begin{tabular}{|l|l|l|l|l|}
\hline
$g$ & $d$ & $k=0$ & $k=1$ & $k=2$ \\ \hline
0   & 0   &   0                      &   $\emptyset$    &   $\emptyset$    \\ \hline
0   & 1   &   $\emptyset$    &    0                     &    $\emptyset$   \\ \hline
1   & 0   &   $\emptyset$    &    0                     &    $\emptyset$   \\ \hline
1   & 1   &   $\emptyset$    &    $\emptyset$   &   1 \\ \hline
\end{tabular}

\vspace{.75cm}

$Table(Q) = T_{(d,c)} = $
\begin{tabular}{|l|l|l|l|l|}
\hline
$d$ & $c$ & $k=0$ & $k=1$ & $k=2$ \\ \hline
0   & 0   &   0                      &   $\emptyset$    &   $\emptyset$    \\ \hline
0   & 1   &   $\emptyset$    &    0                     &    $\emptyset$   \\ \hline
1   & 0   &   $\emptyset$    &    0                     &    $\emptyset$   \\ \hline
1   & 1   &   $\emptyset$    &    $\emptyset$   &   1 \\ \hline
\end{tabular}

\vspace{.75cm}

$Table(H) = Merge(T_{(b,a)},T_{(a,e)},3) = T_{(b,e)} =$

\begin{tabular}{|l|l|l|l|l|l|}
\hline
$b$ & $e$ & $k=0$       & $k=1$       & $k=2$       & $k=3$       \\ \hline
0   & 0   & 0           & 0           & $\emptyset$ & $\emptyset$ \\ \hline
0   & 1   & $\emptyset$ & 0           & 1           & $\emptyset$ \\ \hline
1   & 0   & $\emptyset$ & 0           & 1           & $\emptyset$ \\ \hline
1   & 1   & $\emptyset$ & $\emptyset$ & 1           & 3           \\ \hline
\end{tabular}

\vspace{.75cm}

$Merge(H,N) = Merge(T_{(b,e)},T_{(e,f)},4) = T_{(b,f)} =$

\begin{tabular}{|l|l|l|l|l|l|l|}
\hline
$b$ & $f$ & $k=0$       & $k=1$       & $k=2$       & $k=3$       & $k=4$\\ \hline
0   & 0   & 0           & 0           & 1           & $\emptyset$ & $\emptyset$\\ \hline
0   & 1   & $\emptyset$ & 0           & 1           & 2           & $\emptyset$\\ \hline
1   & 0   & $\emptyset$ & 0           & 1           & 3           & $\emptyset$\\ \hline
1   & 1   & $\emptyset$ & $\emptyset$ & 0           & 2           & 4           \\ \hline
\end{tabular}

\vspace{.75cm}

$Table(I) = Merge(T_{(b,f)},T_{(f,g)},5) = T_{(b,g)} =$

\begin{tabular}{|l|l|l|l|l|l|l|l|}
\hline
$b$ & $g$ & $k=0$       & $k=1$       & $k=2$       & $k=3$       & $k=4$        & $k=5$\\ \hline
0   & 0   & 0           & 0           & 1           & 2           & $\emptyset$  & $\emptyset$\\ \hline
0   & 1   & $\emptyset$ & 0           & 1           & 2           & 3            & $\emptyset$\\ \hline
1   & 0   & $\emptyset$ & 0           & 1           & 3           & 4            & $\emptyset$\\ \hline
1   & 1   & $\emptyset$ & $\emptyset$ & 1           & 2           & 4            & 6 \\ \hline
\end{tabular}

\vspace{.75cm}

$Table(K) = Merge(T_{(g,d)},T_{(d,c)},3) = T_{(g,c)} =$

\begin{tabular}{|l|l|l|l|l|l|}
\hline
$g$ & $c$ & $k=0$       & $k=1$       & $k=2$       & $k=3$       \\ \hline
0   & 0   & 0           & 0           & $\emptyset$ & $\emptyset$ \\ \hline
0   & 1   & $\emptyset$ & 0           & 1           & $\emptyset$ \\ \hline
1   & 0   & $\emptyset$ & 0           & 1           & $\emptyset$ \\ \hline
1   & 1   & $\emptyset$ & $\emptyset$ & 1           & 3           \\ \hline
\end{tabular}

\vspace{.75cm}

$Merge(P,I) = Merge(T_{(c,b)}, T_{(b,g)},6) = T_{(c,g)} =$

\begin{tabular}{|l|l|l|l|l|l|l|l|l|}
\hline
$c$ & $g$ & $k=0$       & $k=1$       & $k=2$       & $k=3$       & $k=4$       & $k=5$           & $k=6$       \\ \hline
0   & 0   & 0           & 0           & 1           & 3           & 4           & $\emptyset$     & $\emptyset$            \\ \hline
0   & 1   & $\emptyset$ & 0           & 1           & 2 &         3             & 6               & $\emptyset$  \\ \hline
1   & 0   & $\emptyset$ & 0           & 1           & 2           & 4           & 5               & $\emptyset$  \\ \hline
1   & 1   & $\emptyset$ & $\emptyset$ & 1           & 3           & 4           & 6               & 8            \\ \hline
\end{tabular}

\vspace{.75cm}

$Table(J) = Merge(T_{(c,g)},T_{(g,c)},7)  = T_{(c,c)} =$

\begin{tabular}{|l|l|l|l|l|l|l|l|l|l|}
\hline
$c$ & $c$ & $k=0$       & $k=1$       & $k=2$       & $k=3$       & $k=4$       & $k=5$       & $k=6$       & $k=7$       \\ \hline
0   & 0   & 0           & 0           & 1           & 3           & 4           & 6           & 7           & $\emptyset$ \\ \hline
0   & 1   & $\emptyset$ & $\emptyset$ & $\emptyset$ & $\emptyset$ & $\emptyset$ & $\emptyset$ & $\emptyset$ & $\emptyset$ \\ \hline
1   & 0   & $\emptyset$ & $\emptyset$ & $\emptyset$ & $\emptyset$ & $\emptyset$ & $\emptyset$ & $\emptyset$ & $\emptyset$ \\ \hline
1   & 1   & $\emptyset$ & 0           & 1           & 3           & 4           & 6           & 8           & 10          \\ \hline
\end{tabular}

\newpage
\section{Procedures for the Exact $b$-outerplanar Algorithm}\label{app:b_outerplanar_algs}
\begin{figure}[!htbp]
\begin{algorithmic}
\small
\Procedure{table}{$v$}
   \State let $(x,y)$ be the label of $v$;
   \State let $i$ be the level of $v$;
   \If{$v$ is a vertex representing a face $f$ and $f$ encloses no level $i+1$ component}
      \State $T = \algtable(u)$, where $u$ is the leftmost child of $v$;
      \For{each other child $c$ of $v$ from left to right}
         \State $T = \algmerge(T, \algtable(c))$;
      \EndFor
      \State \textbf{return} $\algadjust(T)$;
   \ElsIf{$v$ is a vertex representing a face $f$ and $f$ encloses a level $i+1$ component $C$}
      \State let $w$ be the root of the tree corresponding to $C$;
      \State \textbf{return} $\algadjust(\algcontract(\algtable(w)))$;
   \ElsIf{$v$ is a level 1 leaf}
      \State \textbf{return} a table for the edge $(x,y)$;
   \Else
      \State /* $v$ is a level $i>1$ leaf */
      \State let $f$ be the level $i$ face enclosing the component for $v$;
      \State let the labels of the children of $\vertex(f)$ be $(z_1,z_2),(z_2,z_3),\dots,(z_m,z_{m+1})$;
      \If{$y$ is adjacent to some $z_r$ for $\lbn(v) \leq r \leq \rbn(v)$}
         \State let $p$ be the least such $r$;
      \Else
         \State $p = \rbn(v)$;
      \EndIf
      \State /* note $z_p$ is a point between level $i-1$ vertices adjacent to $x$ and level $i-1$ vertices adjacent to $y$ in $\slice(v)$ */
      \State $T = \algcreate(v,p)$;
      /* extend the leftmost $p$ tables to include $x$ and edges from $x$ to $z_r$ for $r \leq p$, and merge with $T$ */
      \State let $j = p-1$;
      \While{$j \geq \lbn(v)$}
         \State $T = \algmerge(\algextend(x,\algtable(u_j)),T)$ where $u_j$ is the $j$th child of $\vertex(f)$;
         \State $j = j-1$;
      \EndWhile

      /* extend the remaining tables to include $y$ and edges from $y$ to $u_r$ for $r \geq p$, and merge with $T$ */
      \State $j = p$;
      \While{$j < \rbn(v)$}
         \State $T = \algmerge(T,\algextend(y,\algtable(u_j)))$ where $u_j$ is the $j$th child of $\vertex(f)$;
         \State $j = j + 1$;
      \EndWhile
      \State \textbf{return} $T$;
   \EndIf
\EndProcedure
\end{algorithmic}
\caption{The \textit{table} procedure.}
\label{alg:table}
\end{figure}

\begin{figure}[!htbp]
\begin{algorithmic}
\small
\Procedure{adjust}{$T$}
   \State let $L$ and $R$ be the left and right boundaries, respectively, for the slice that $T$ represents;
   \State let $x$ and $y$ be the highest level vertices in $L$ and $R$, respectively;
   \If{$x \neq y$ and $(x,y)$ is an edge in the graph}
      \State add 1 to any table entry in $T$ where $x$ and $y$ are both included;
   \EndIf
   \State \textbf{return} $T$;
\EndProcedure
\end{algorithmic}
\caption{The \textit{adjust} procedure.}
\label{alg:adjust}
\end{figure}

\begin{figure}[!htbp]
\begin{algorithmic}
\small
\Procedure{merge}{$T_1,T_2$}
   \State let $T$ be an initially empty table;
   \State let $L$ and $M$ be the left and right boundaries of the slice that $T_1$ represents;
   \State let $M$ and $R$ be the left and right boundaries of the slice that $T_2$ represents;
   \For{each subset $A$ of $L \cup R$}
      \For{each $k'=0, \dots, \max_{T_1}k' + \max_{T_2}k' - |M|$}
         \State let $V$ be an initially empty list;
         \For{each subset $B$ of $M$}
            \State let $n = 0$;
            \State let $x$ and $y$ be the top level vertices in $L$ and $R$, respectively;
            \If{$x = y$ and $x$ and $y$ are both in $A$}
               \State let $n = 1$;
            \EndIf
            \For{each $k_1,k_2$ satisfying $k_1 + k_2 - |B| - n = k'$}
               \State let $m$ be the number of edges between vertices in $B$;
               \State let $v = T_1((A \cap L) \cup B, k_1) + T_2((A \cap R) \cup B, k_2) - m$;
               \If{$v$ is not undefined}
                  \State add $v$ to $V$;
               \EndIf
            \EndFor
         \If{$V$ is not empty}
            \State let $T(A, k') = \max_vV$;
         \Else
            \State let $T(A, k')$ be undefined;
         \EndIf
         \EndFor
      \EndFor
   \EndFor
   \State \textbf{return} $T$;
\EndProcedure
\end{algorithmic}
\caption{The \textit{merge} procedure.}
\label{alg:merge}
\end{figure}

\begin{figure}[!htbp]
\begin{algorithmic}
\small
\Procedure{contract}{$T$}
   \State let $C$ be the level $i+1$ component such that $T$ represents the slice of $C$;
   \State let $f$ be the level $i$ face that contains $C$;
   \State let $L$ and $R$ be the left and right boundaries of the slice of $f$;
   \State let $(z,z)$ be the label of the tree root such that $T$ is a table for $\slice((z,z))$;
   \State let $T'$ be an initially empty table;
   \For{each subset $A$ of $L \cup R$}
      \For{each $k'=0,\dots,\max_Tk'$}
         \State let $T'(A, k') = \max\{T(A, k'), T(A \cup \{z\}, k')\}$;
      \EndFor
   \EndFor
   \State \textbf{return} $T'$;
\EndProcedure
\end{algorithmic}
\caption{The \textit{contract} procedure.}
\label{alg:contract}
\end{figure}

\begin{figure}[!htbp]
\begin{algorithmic}
\small
\Procedure{extend}{$z,T$}
   \State let $T'$ be a table that is initialized with every entry in $T$;
   \State let $L$ and $R$ be the boundaries for the slice represented by $T$;
   \For{each subset $A$ of $L \cup R$}
      \For{each $k'=0,\dots,\max_Tk'$}
         \If{$T(A,k'-1)$ is not undefined}
            \State let $m$ be the number of edges between $z$ and every vertex in $A$;
            \State let $T'(A \cup \{z\},k') = T(A,k'-1) + m$;
         \Else
            \State let $T'(A \cup \{z\},k')$ be undefined;
         \EndIf
      \EndFor
   \EndFor
   \State \textbf{return} $T'$;
\EndProcedure
\end{algorithmic}
\caption{The \textit{extend} procedure.}
\label{alg:extend}
\end{figure}

\end{document}